\numberwithin{equation}{section}
\renewcommand{\ln}{\log}
\renewcommand{\vec}[1]{\boldsymbol{#1}}
\newcommand\cK{\mathcal K}
\newcommand\cR{\mathcal R}
\newcommand\cU{\mathcal U}
\newcommand\cW{\mathcal W}
\newcommand\cY{\mathcal Y}
\newcommand\vF{\vec F}
\newcommand\vL{\vec L}
\newcommand\vN{\vec N}
\newcommand\vS{\vec S}
\newcommand\vX{\vec X}
\newcommand\vY{\vec Y}
\newcommand\vZ{\vec Z}
\newcommand\GG{\mathbb{G}}
\newcommand\Erw{\mathbb{E}}
\newcommand\ex{\Erw}
\newcommand{\Po}{{\rm Po}}
\newcommand{\Erdos}{Erd\"os}
\newcommand{\Renyi}{R\'enyi}
\newcommand{\Bollobas}{Bollob\'as}
\newcommand{\Chvatal}{Chv\'{a}tal}
\newcommand\pr{\mathbb{P}} 
\renewcommand\Pr{\pr} 
\newcommand\Lem{Lemma}
\newcommand\Prop{Proposition}
\newcommand\Thm{Theorem}
\newcommand\Def{Definition}
\newcommand\Sec{Section}
\newtheorem{definition}{Definition}[section]
\newtheorem{remark}[definition]{Remark}
\newtheorem{theorem}[definition]{Theorem}
\newtheorem{lemma}[definition]{Lemma}
\newtheorem{proposition}[definition]{Proposition}
\newtheorem{corollary}[definition]{Corollary}
\newtheorem{conjecture}[definition]{Conjecture}
\newcommand{\floor}[1]{\left\lfloor#1\right\rfloor}
\def\pr{{\mathbb P}}
\newcommand{\diag}{\operatorname{diag}}
\newcommand{\HYB}{\mathrm{HYB}}
\begin{document}
	
	\title{Symmetric Rule-Based Achlioptas Processes for Random $k$-SAT}
	\author{Arnab Chatterjee}
	
	\address{Arnab Chatterjee, {\tt arnab.chatterjee@tu-dortmund.de}, TU Dortmund, Faculty of Computer Science, 12 Otto-Hahn-St, Dortmund 44227, Germany.}
	
	\maketitle
	
	\begin{abstract}
	Inspired by the "power-of-two-choices" model from random graphs, we investigate the possibility of limited choices of online clause choices that could shift the satisfiability threshold in random $k$-SAT. 
	Here, we introduce an assignment symmetric, non-adaptive, topology-oblivious online rule called \emph{MIDDLE-HEAVY}, that prioritizes balanced sign profile clauses.  
    Upon applying a biased $2$-SAT projection and a two-type branching process certificate, we derive closed-form expressions for the shifted thresholds $\alpha_{\textbf{SYM}}(k,\ell)$ for this algorithm. 
    We show that minimal choices $\ell=5$ for $k=4$, $\ell=4$ for $k=5$, and $\ell=3$ for $k\ge 6$ suffice to exceed the asymptotic first-moment upper bound $\sim 2^k \ln 2$ for random $k$-SAT. 
    Moreover, to bridge the gap with biased assignment rules used in maximum of the previous works in this context, we propose a hybrid symmetric biased rule that achieves thresholds comparable to prior work while maintaining symmetry. 
    Our results advance the understanding of Achlioptas processes in random CSPs beyond classical graph-theoretic settings.
		
		\hfill MSC: 05C80, 68W20
		
		Keywords: Random $k$-SAT, Achlioptas process, Satisfiability threshold, Power of choices, Branching processes
	\end{abstract}
	
	\section{Introduction and results}\label{Sec_not_intro}
	
	\subsection{Background and motivation}\label{sec_background}
	Sharp threshold phenomena, a central role in the probabilistic study of random discrete structures, abound particularly within random constraint satisfaction problems, where they emerge as phase transitions.	
	The most canonical example was established in the year 1960 by two Hungarian mathematician Paul \Erdos~ and Alfred \Renyi~ \cite{ER} who introduced the classical random graph process which starts with an empty graph on $n$ vertices and at each step $i=1,\cdots,m$ we add a single new edge chosen uniformly at random from all possible $\binom{n}{2}$ edges without replacement. 
	This is equivalent to selecting $m$ edges uniformly at random without replacement from the set of all possible $\binom{n}{2}$ edges, starting from an empty graph on $n$ vertices -- widely known as \emph{Uniform ER-random graph} $\GG_{n,m}$.
	In the classical evolution of random graphs \cite{Bollobas}, a handful of edges around average degree one separates a world of logarithmic components in the "\emph{sub-critical phase}" from the emergence of a giant component in "\emph{super-critical phase}".
	In other words, for a fixed $\varepsilon>0$, in the limit $n\to \infty$, with probability $1-o(1)$ the largest component of $\GG_{n,m}$ is given by,
	\begin{align*}
		\vL_{1}(\GG_{n,m})=
		\begin{cases}
			&O(\log n), \quad m\leq \left(1/2 -\varepsilon\right)n \\
			&\Omega(n), \quad m\geq \left(1/2 +\varepsilon\right)n 
		\end{cases}
	\end{align*}
	Thus, with the addition of a sub-linear number of additional edges the emergence of a giant, a linear sized component exhibits a sharp threshold with its probability rising from near $0$ to near $1$.
	This dramatic change of the typical structure of a random graph is called its \emph{phase transition}.
	A natural question arises as to what happens when $m/n\to 1/2$, either from below or above, as $n\to\infty$.
	It appears that $\GG_{n,m}$ is undergoing a rapid change in its typical structure in this regime -- so called \emph{scaling window} or \emph{critical window}\cite{Bollobas}.
	Further \Erdos~ and \Renyi~ \cite{ER} studied that when $m=n/2$, the size of the largest tree is likely to be around $n^{2/3}$ -- the transition from $O(\log n)$ through $\Theta(n^{2/3})$ to $\Omega(n)$, is called '\emph{double jump}'.
	
	In the year 2001, Dimitris Achlioptas proposed the following variation of the random graph process.
	\begin{itemize}
		\renewcommand{\labelitemi}{\scriptsize$\blacksquare$}
		\item Start with an empty graph $\GG_0$ on $n$ vertices.
		\item At any step $i$, to create $\GG_{i}$ from $\GG_{i-1}$, suppose that instead of adding a random edge $e_i$, we are presented with a choice of a pair of uniformly random edges $\left(e_i,f_i\right)$ and based on some given rule we add one of them to the graph $\GG_{i-1}$.
	\end{itemize}
	There are two variants of this process.
	In the online version, the edge choice can depend on the graph up to this point without knowing anything on the future pair of edges.
	Whereas, in the offline version the sequence of potential pair of edges is known in advance.
	As an initial challenge he asked whether is it possible to delay the birth of giant component after $m=cn$ edges for some $c>1/2$.
	The first positive answer was given by Bohman and Frieze in 2001 \cite{BohmanFrieze} by means of simple rule.
	They showed that this can be possible to delay the emergence of giant up to $c=0.535$.
	Beside pushing the threshold to the right, in 2007 Spencer and Wormald \cite{SW} provide a rigorous analysis on the birth control for the giants and speed up the process to $c=0.334$.
	In the early 2000s, although most attention has centered on the phase transition \cite{BohmanKravitz,FGS,RW}, however, threshold-shifting rules have likewise been established for other properties, such as small subgraphs \cite{KLS,MST} and Hamiltonian cycles \cite{KLS}.
	
	In theoretical computer science, an analogous and computationally richer phenomenon occurs in random $k$-SAT: for each fixed $k$; there is a critical clause density  $\alpha=m/n$ for which the random $k$-SAT formula $\vF_{n,m}$ which consists of $n$ variables $\left(x_1,x_2,\cdots,x_n\right)$ and $m$ clauses, each of which consists of exactly $k$ literals and chosen uniformly at random from all possible $2^{k}\binom{n}{k}$ clauses, transitions from satisfiable to unsatisfiable with high probability.
	Then,
	\begin{conjecture}
		For any $k\ge 2$ there exists a constant $\alpha_{sat}(k)$ such that for any $\varepsilon>0$,
		\begin{align*}
			\lim_{n\to\infty} \Pr[\vF(k,\alpha_{sat}(k)-\varepsilon) ~\mbox{is ~SAT}~]=1 && \quad \lim_{n\to\infty} \Pr[\vF(k,\alpha_{sat}(k)+\varepsilon) ~\mbox{is ~SAT}~]=0
		\end{align*}
	\end{conjecture}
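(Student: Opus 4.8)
This is the celebrated \emph{satisfiability threshold conjecture}; it is at present a theorem only for $k=2$, where $\alpha_{sat}(2)=1$, and for all sufficiently large $k$, and remains open for small $k\ge 3$. The plan below is the standard moment-method attack, which settles the statement in those ranges. First I would reduce the problem to the convergence of a threshold \emph{sequence}: by Friedgut's sharp-threshold theorem for $k$-SAT there is, for each fixed $k$, a sequence $\alpha_k(n)$ such that $\vF(k,(1-\eps)\alpha_k(n))$ is satisfiable \whp\ while $\vF(k,(1+\eps)\alpha_k(n))$ is unsatisfiable \whp. It then suffices to prove $\liminf_n\alpha_k(n)=\limsup_n\alpha_k(n)$ and to call the common value $\alpha_{sat}(k)$; equivalently, to exhibit matching upper and lower bounds on the threshold location.

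For the upper bound, let $Z(\vF)$ count the satisfying assignments of $\vF$. Then $\Erw[Z]=2^n\bc{1-2^{-k}}^m$, which tends to $0$ as soon as $m/n>2^k\ln 2$, so $\limsup_n\alpha_k(n)\le 2^k\ln 2$; weighting assignments by a function of the proportion of true variables, or running an interpolation/local-entropy bound, sharpens this to $2^k\ln 2-\tfrac{1+\ln 2}{2}+o_k(1)$, the conjectured value.

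For the lower bound I would use the second moment method. The plain second moment on $Z$ is dominated by pairs of assignments at Hamming distance $n/2$ and certifies only $\alpha_{sat}(k)\ge 2^k\ln 2-O(k)$, so instead I would apply it to a reweighted partition function $Z_w(\vF)=\sum_\sigma\prod_c w\bc{\#\{\text{literals of }c\text{ satisfied by }\sigma\}}$ with the weight $w$ tuned so that the genuinely balanced assignments dominate $\Erw[Z_w]$, and --- in the range of $k$ where this alone is insufficient --- restrict the sum to assignments lying in a single cluster, i.e.\ compute moments of the cluster count predicted by the one-step replica-symmetry-breaking cavity method. Then $\Erw[Z_w^2]/\Erw[Z_w]^2$ is governed by a Laplace-type variational problem whose stationary point matches the cavity fixed point; showing that the maximiser is the ``diagonal'' one up to the conjectured density, together with small-subgraph conditioning to absorb the $\Theta(1)$ contributions of bounded-length cycles, yields $Z_w>0$ \whp\ and hence $\liminf_n\alpha_k(n)\ge\alpha_{sat}(k)$. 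For $k=2$ the two bounds meet directly via the strong-connectivity characterisation of the implication digraph.

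The hard part is this last step for small $k$: even with the optimal weight and the cluster restriction, the second-moment exponent is not known to meet the first-moment exponent for $3\le k\le k_0$, so the variational optimisation does not close and the existence of the limit $\alpha_{sat}(k)$ stays open there. Removing this obstruction would require a positivity argument beyond the second moment --- for instance a rigorous survey-propagation count --- or a first-moment/interpolation upper bound sharp enough to pin Friedgut's window to a single point for every $k$.
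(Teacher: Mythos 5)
This statement is labeled a \emph{conjecture} in the paper, and indeed the paper offers no proof of it: it simply states the satisfiability threshold conjecture and then recites its known status (Chv\'atal--Reed and Goerdt for $k=2$, Ding--Sly--Sun for all sufficiently large $k$, Friedgut's sharp-threshold result giving a sequence $\alpha_k(n)$ but not a constant). You correctly recognize that no complete proof exists, and your survey of the landscape --- Friedgut reduces the problem to pinning down a single limiting constant, first-moment gives the $2^k\ln 2$ upper bound, weighted/clustered second-moment with small-subgraph conditioning gives matching lower bounds for large $k$ via the 1-RSB cavity prediction, the implication-digraph argument settles $k=2$, and the variational gap leaves $3\le k\le k_0$ open --- is accurate and matches the references the paper invokes. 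Since the paper proves nothing here, there is no proof to compare against; your honest account of what is and is not known is the right response, and nothing in it is wrong.
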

	This conjecture has been proved for $k=2$ with $\alpha_{sat}=1$ by Chvatal and Reed in 1992 \cite{CR} and Goerdt in 1996 \cite{Goerdt}.
	Later in 2015, Ding, Sly and Sun \cite{DSS} prove the satisfiability conjecture for large but finite $k$ value and show that the value of $\alpha_{sat}(k)$ is given by the statistical physics inspired one step symmetry breaking cavity method prediction.
	However, Friedgut in 1999 \cite{Friedgut} provides a partial result in this regards, there exists a sequence $\alpha_{k}(n)$ such that for all $\varepsilon>0$,
	\begin{align*}
		&\lim_{n\to\infty} \Pr[\vF(k,\alpha_{k}(n)-\varepsilon) \mbox{is ~SAT}~]=1 \\ 
		&\lim_{n\to\infty} \Pr[\vF(k,\alpha_{k}(n)+\varepsilon) \mbox{is ~SAT}~]=0
	\end{align*}
	This provides the sharp satisfiability threshold by making the transition from SAT to UNSAT which takes place in a window smaller than any fixed $\varepsilon$ for large enough $n$.
	Beside the satisfiability threshold, in recent years numerous papers rigorously analyzed the number of solutions of random $2$-SAT problem \cite{2sat,Chatterjee1}. 
	However, there remain still a rich and interesting problem for $k\geq 3$ with sharpness established and progressively tighter upper/lower bounds obtained by first and second moment methods \cite{nae,yuval} respectively and by algorithmic analyses \cite{Achlioptas1,BFU,CF} (e.g., unit-clause propagation and more sophisticated statistical physics inspired decimation process \cite{Chatterjee2, Angelica}).
	
	Over the years there are rigorous analysis on the upper and lower bounds on the satisfiability threshold $\alpha_{k}$ assuming it exists.
	Specifically, for $k=3$ the current best lower  \cite{HS,KKL} and upper bound \cite{DB}  is given by,
	\begin{align*}
		\underbrace{3.52}_{\alpha_{\text{lower}}}\leq \alpha_{3}\leq \underbrace{4.4898}_{\alpha_{\text{upper}}}
	\end{align*}
	A natural question, inspired by "\emph{power-of-two-choices}" phenomena, is whether limited online version can shift such thresholds.
	So, for random satisfiability problems, the same paradigm asks: can a fixed number $\ell$ of online clause choices per step shift the satisfiability threshold of random 
	$k$-SAT? 
	If so, then by how much, with what exact rules, and using what certificates?
	
	\subsection{Achlioptas processes for random $k$-SAT}\label{sec_achlioptas}
	Towards answering the question asked in the previous section, the analogous achlioptas process for the $k$-SAT formula is the following: 
	\begin{itemize}
		\renewcommand{\labelitemi}{\scriptsize$\blacksquare$}
		\item Fix integers $k\ge 2$ and $\ell\ge 2$, at each step $t=1,2,\ldots,m$, we draw $\ell$ clauses uniformly and independently at random from the $2^{k}\binom{n}{k}$ possible $k$-clauses (with replacement) and select exactly one $k$-clause $C_t$ to add to the growing formula $\vF_t=\vF_{t-1} \land C_t$ according to a prescribed online rule $\mathcal{R}$. 
		
		\item The goal is to design $\mathcal{R}$ so as to keep $\vF_m$ satisfiable at clause densities $\alpha=m/n$ that are as large as possible, ideally exceeding the baseline threshold of the i.i.d.\ model.
	\end{itemize}
	A key simplification, used in most of the prior works, is a fixed \emph{2-SAT projection} of each chosen $k$-clause to a 2-clause determined by the number $\vX\in\{0,1,\ldots,k\}$ of positive literals corresponds to its appearance in the $2$-clause.
	So, we keep the pair of literals consists of $+/-$ in a $2$-clause based on the $\vX$ values: 
	\begin{align}\label{eq_X}
		\begin{cases}
			--  \quad &\mbox{if $\vX=0$},\\
			+- ~~\mbox{or}~~ -+  \quad &\mbox{if $\vX=1$},\\
			++   &\mbox{otherwise}		
		\end{cases}
	\end{align}
	
	Let $m=\lfloor \alpha n\rfloor$ and let $p_0,p_1,p_2$ denote the per step probabilities (under $\cR$) that the \emph{selected} $k$-clause, after the above projection, yields a 2-clause of type $--$, $+-$, and $++$, respectively.
	In the implication digraph, each 2-clause $(x_{i}\vee x_{j})$ contributes the implications $\neg x_{i}\to x_{j}$ and $\neg x_{j}\to x_{i}$; thus a type $++$ clause produces two edges from negative to positive, a type $--$ clause produces two from positive to negative, and a type $+-$ clause produces one $+\!\to\!+$ and one $-\!\to\!-$.
	This yields a two-type Galton--Watson exploration with mean matrix
	\[
	M(\alpha)\;=\;\alpha\begin{pmatrix} p_1 & 2p_0\\[2pt] 2p_2 & p_1\end{pmatrix}
	\]
	whose spectral radius is $\rho(M(\alpha))=\alpha\big(p_1+2\sqrt{p_0p_2}\big)$.
	We therefore set,
	\begin{align}\label{eq_q}
		Q:=p_1+2\sqrt{p_0p_2}
	\end{align}
	and obtain the following threshold for the clause density $\alpha=\alpha(k,\ell)$ below which the random $k$-SAT formula $\vF_{n,m}$ is satisfiable with high probability,
	\[
	\alpha(k,\ell)<\frac{1}{Q}
	\]
	All rule-specific formulas for $(p_0,p_1,p_2)$ and the proof of the above certificate are deferred to Section~\ref{sec_model}.
	In \cite{SV}, the authors showed that two choices suffice to delay the 2-SAT threshold to approximately $1.0002$, and for off-line choices, the threshold coincides with that of random 2k-SAT. In \cite{Perkins}, it was proven that a biased rule shifts the threshold for all $k \geq 2$, with five choices for $k=3$ and improvements on 2-SAT delay. In \cite{DDHM}, three choices suffice for any $k\geq 2$, and two for $3 \leq k \leq 25$, using max-positives rules.
	In the next section we state our main result by defining the new rule which is assignment symmetric in contrast to the biased assignment rules used in \cite{DDHM,Perkins,SV}.
	
	\subsection{Main results}\label{sec_result}
	\begin{definition}[Assignment Symmetric Rule]\label{def:intro-symmetric}
		Fix integer $k\ge 4$ and $\ell\ge 2$.
		For a clause $C$, let $\vX=\vX(C)\in\{0,1,\ldots,k\}$ be the number of positive literals in $C$.
		We partition the sign profiles into three classes:
		\begin{align*}
			\textbf{AS}&=\{\vX=0 \text{ or } \vX=k\},\\
			\textbf{EDGE}&=\{\vX=1 \text{ or } \vX=k-1\},\\
			\textbf{MID}&=\{2\le \vX\le k-2\}.
		\end{align*}
		At each step we are presented with $\ell$ candidate clauses $C^{(1)},\ldots,C^{(\ell)}$, sampled uniformly and independently from the $2^k\binom{n}{k}$ possible $k$-clauses with replacement.
		The rule selects the first clause in the priority order
		\[
		\textbf{MID}~\succ~\textbf{EDGE}~\succ~\textbf{AS},
		\]
		breaking ties by the presentation index $j\in\{1,\ldots,\ell\}$.
	\end{definition}
	The pseudocode of our online-choice assignment symmetric rule (\emph{\textbf{MIDDLE-HEAVY}}) for selecting one clause $C_j$ of the $\ell$ choices with $j \in \{1,\cdots,\ell\}$ is displayed as Algorithm~\ref{alg_symksat}.
	
	\begin{algorithm}[h!]
		\For{each round $t = 1, \ldots, m$}{
			\If{$\vX(C_j) \in \textbf{MID}$ \tcp*[r]{one of the $\ell$ clauses contains at least two positive and two negative literals}}{
				select $C_j$\;
				\tcp{if multiple such $C_j$, select the first one}
			}
			\ElseIf{$\vX(C_j) \in \textbf{EDGE}$}{
				select $C_j$\;
				\tcp{if multiple such $C_j$, select the first one}
			}
			\ElseIf{$\vX(C_j) \in \textbf{AS}$}{
				select $C_j$\;
				\tcp{if multiple such $C_j$, select the first one}
			}
			\Return{$C_j$}
		}
		\caption{\vspace*{2mm}The \textbf{MIDDLE-HEAVY} algorithm.}
		\label{alg_symksat}
	\end{algorithm}
	\begin{lemma}\label{lem_binom}
		For $k\geq 4$, let $\vX$ denote the number of positive literals in a uniformly random $k$-clause, so that $\vX\sim\text{Bin}(k,\tfrac{1}{2})$.
		Then the probability masses of the three partition classes in \Def~\ref{def:intro-symmetric} are given by,
		\begin{align*}
			s_{\mathrm{\textbf{AS}}}=
			2^{1-k}, && \quad 
			s_{\mathrm{\textbf{EDGE}}}=
			\frac{k}{2^{\,k-1}},
			\mbox{and} && \quad 
			s_{\mathrm{\textbf{MID}}}=
			1-2^{1-k}-\frac{k}{2^{\,k-1}}
		\end{align*}
	\end{lemma}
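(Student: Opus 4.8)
The claim is an elementary computation with the binomial distribution $\vX \sim \mathrm{Bin}(k,\tfrac12)$, so the plan is simply to evaluate the three class probabilities directly from the definition of the partition in \Def~\ref{def:intro-symmetric} and simplify. First I would handle $\textbf{AS} = \{\vX = 0\} \cup \{\vX = k\}$. Since $\Pr[\vX = j] = \binom{k}{j} 2^{-k}$, we get $s_{\textbf{AS}} = \Pr[\vX=0] + \Pr[\vX=k] = 2^{-k} + 2^{-k} = 2^{1-k}$, using $\binom{k}{0} = \binom{k}{k} = 1$ and the fact that the two events are disjoint (which holds precisely because $k \ge 1$, so $0 \ne k$). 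Next, for $\textbf{EDGE} = \{\vX = 1\} \cup \{\vX = k-1\}$, again a disjoint union since $k \ge 3$ forces $1 \ne k-1$, we have $s_{\textbf{EDGE}} = \binom{k}{1} 2^{-k} + \binom{k}{k-1} 2^{-k} = k\,2^{-k} + k\,2^{-k} = 2k\,2^{-k} = k\,2^{1-k} = \tfrac{k}{2^{k-1}}$.

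Finally, $s_{\textbf{MID}}$ follows by complementation: the three classes $\textbf{AS}$, $\textbf{EDGE}$, $\textbf{MID}$ form a partition of the support $\{0,1,\ldots,k\}$ of $\vX$ — this is where the hypothesis $k \ge 4$ is used, since it guarantees that the index ranges $\{0,k\}$, $\{1,k-1\}$, and $\{2,\ldots,k-2\}$ are pairwise disjoint and cover everything (for $k=3$ the set $\{2,\ldots,k-2\}$ would be empty and the ``edge'' and ``middle'' ranges would overlap or degenerate, and for $k \le 2$ the decomposition breaks down entirely). Hence $s_{\textbf{MID}} = 1 - s_{\textbf{AS}} - s_{\textbf{EDGE}} = 1 - 2^{1-k} - \tfrac{k}{2^{k-1}}$, which is the stated formula. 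One could alternatively verify this by summing $\binom{k}{j}$ over $2 \le j \le k-2$ and using $\sum_{j=0}^k \binom{k}{j} = 2^k$ together with the values already computed, but complementation is the cleanest route.

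There is essentially no obstacle here; the only point requiring a moment's care is the bookkeeping on which values of $k$ make the partition well-defined, and noting that for all $k \ge 4$ the three index sets are genuinely disjoint so that no term is double-counted. I would state the verification of disjointness explicitly as a one-line remark before doing the arithmetic, since it is the sole place the hypothesis $k \ge 4$ enters, and then the three equalities follow immediately from $\Pr[\vX = j] = \binom{k}{j}2^{-k}$ and $\binom{k}{0} = \binom{k}{k} = 1$, $\binom{k}{1} = \binom{k}{k-1} = k$.
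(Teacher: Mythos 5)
Your proof is correct and follows essentially the same route as the paper: evaluate $\Pr[\vX=j]=\binom{k}{j}2^{-k}$ on each priority class and obtain $s_{\mathbf{MID}}$ by complementation. The brief remark on disjointness of the three index ranges (where $k\ge 4$ enters) is a sensible clarification that the paper leaves implicit, but it does not change the substance of the argument.
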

	\begin{proof}
		Since each literal is positive with probability $1/2$, the number $\vX$ of positives is distributed as $\mathrm{Bin}(k,1/2)$. Thus,
		\begin{align*}
			\pr[\vX=j]=\binom{k}{j}\cdot2^{-k}, && \quad j=0,1,\cdots,k
		\end{align*}
		For \textbf{AS} class, since this corresponds to all negatives ($\vX=0$) or all positive literals ($\vX=k$), hence
		\begin{align*}
			s_{\mathrm{\textbf{AS}}}=
			\Pr[\vX\in\textbf{AS}]=\Pr[\vX=0]+\Pr[\vX=k]
			=2\cdot 2^{-k}=2^{1-k}
		\end{align*}
		For \textbf{EDGE} class, since this corresponds to exactly one positive ($\vX=1$) or exactly one negative literal ($\vX=k-1$), hence
		\begin{align*}
			s_{\mathrm{\textbf{EDGE}}}=
			\Pr[\vX\in\textbf{EDGE}]=\Pr[\vX=1]+\Pr[\vX=k-1]
			=2\cdot \frac{k}{2^{\,k}}=\frac{k}{2^{\,k-1}} 
		\end{align*}
		Now, coming to the \textbf{MID} class, this consists of all other cases when a clause contains at least two positive and two negative literals ($2\leq\vX\leq k-2$).
		So,
		\begin{align*}
			s_{\mathrm{\textbf{MID}}}=1-s_{\mathrm{\textbf{AS}}}-s_{\mathrm{\textbf{EDGE}}}
		\end{align*} 
		This completes the proof.
	\end{proof}
	\begin{theorem}\label{thm:main-symmetric}
		For every integer $k\ge 4$ and $\ell\ge 2$, consider the Assignment Symmetric Rule $\cR_{\mathrm{sym}}$ defined in \Def~\ref{def:intro-symmetric} and the 2-SAT projection in \eqref{eq_X}. Let $p_0,p_1,p_2$ be the selected-type frequencies from \Sec~\ref{sec_achlioptas}, and  $Q:=p_1+2\sqrt{p_0p_2}$ from \eqref{eq_q}.
		Then for every $\varepsilon>0$, the $\ell$-choice Achlioptas process under $\cR_{\mathrm{sym}}$ produces a satisfiable formula w.h.p.\ after $(\alpha_{\textbf{SYM}}(k,\ell)-\varepsilon)n$ steps, where
		\[
		\alpha_{\textbf{SYM}}(k,\ell)=\frac{1}{Q}.
		\]
		Moreover, the following minimal choices $\ell$ ensure $\alpha_{\textbf{SYM}}(k,\ell)\;>\;2^k\ln 2$ (the asymptotic classical first-moment upper bound for random $k$-SAT), hence strictly shift the satisfiability threshold:
		\[
		\begin{cases}
			k=4:& \ \ell=5,\\[2pt]
			k=5:& \ \ell=4,\\[2pt]
			k\ge 6:& \ \ell=3.
		\end{cases}
		\]
	\end{theorem}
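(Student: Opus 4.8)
The plan is to split the statement into two parts: (i) the threshold identity $\alpha_{\textbf{SYM}}(k,\ell)=1/Q$, and (ii) the minimal-$\ell$ assertions comparing $1/Q$ against $2^k\ln 2$. For part (i), the threshold formula $\alpha(k,\ell)<1/Q$ is exactly the certificate promised in \Sec~\ref{sec_achlioptas}, whose proof is deferred to \Sec~\ref{sec_model}; so here I would simply invoke that certificate once the selected-type frequencies $(p_0,p_1,p_2)$ for $\cR_{\mathrm{sym}}$ have been computed, and note that the two-type Galton--Watson exploration on the implication digraph is subcritical precisely when $\rho(M(\alpha))=\alpha Q<1$, i.e. $\alpha<1/Q$. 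The bulk of the work for part (i) is therefore the explicit computation of $(p_0,p_1,p_2)$ for the \textbf{MIDDLE-HEAVY} rule. Using \Lem~\ref{lem_binom}, the probability that a single presented clause lands in \textbf{MID}, \textbf{EDGE}, \textbf{AS} is $s_{\mathbf{MID}},s_{\mathbf{EDGE}},s_{\mathbf{AS}}$ respectively; since the $\ell$ clauses are i.i.d., the selected clause lies in \textbf{MID} unless all $\ell$ miss \textbf{MID}, in \textbf{EDGE} unless all $\ell$ miss \textbf{MID}$\cup$\textbf{EDGE} (given at least one is in \textbf{EDGE}), and in \textbf{AS} only if all $\ell$ are in \textbf{AS}. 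Concretely the selected class is \textbf{MID} w.p. $1-(1-s_{\mathbf{MID}})^\ell$, \textbf{EDGE} w.p. $(1-s_{\mathbf{MID}})^\ell-s_{\mathbf{AS}}^\ell$, and \textbf{AS} w.p. $s_{\mathbf{AS}}^\ell$. Then, conditioning within each class on the value of $\vX$ (which is $\mathrm{Bin}(k,1/2)$ restricted to that class, by symmetry of the i.i.d.\ draws and the index-based tie-break), I push forward through the projection \eqref{eq_X}: \textbf{AS} always gives type $--$ (if $\vX=0$) or $++$ (if $\vX=k$), each with conditional probability $1/2$; \textbf{EDGE} gives $+-$ if $\vX=1$ or $\vX=k-1$ — wait, $\vX=1$ is type $+-$ and $\vX=k-1$ is also type $+-$ since $k-1\ge 2$ for $k\ge 4$? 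No: $\vX=k-1\ge 2$ falls in the "otherwise" branch, type $++$; and $\vX=1$ is type $+-$. So \textbf{EDGE} splits evenly between $+-$ and $++$. Finally \textbf{MID} always has $2\le\vX\le k-2$, hence always projects to $++$. Collecting terms by symmetry ($p_0=p_2$ because the rule and the class definitions are invariant under global sign flip) gives closed forms for $p_0,p_1,p_2$ in terms of $k,\ell$, and hence $Q=p_1+2p_0$.

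For part (ii), once $Q=Q(k,\ell)$ is in closed form, the claim is the finite set of inequalities $Q(4,5)<(2^4\ln2)^{-1}$, $Q(4,4)\ge(2^4\ln2)^{-1}$ (minimality), $Q(5,4)<(2^5\ln2)^{-1}$, $Q(5,3)\ge(2^5\ln2)^{-1}$, and for each $k\ge 6$ both $Q(k,3)<(2^k\ln2)^{-1}$ and $Q(k,2)\ge(2^k\ln2)^{-1}$. The first four are single numerical checks. The infinite family $k\ge 6$ requires an asymptotic/monotonicity argument: one observes that for $\ell$ fixed, $s_{\mathbf{MID}}\to 1$ and $s_{\mathbf{AS}},s_{\mathbf{EDGE}}\to 0$ exponentially in $k$, so $p_1$ and $p_0$ are exponentially small and one needs the precise leading order. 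For $\ell=3$, the dominant contribution to $Q$ comes from the \textbf{EDGE} events — a clause lands in \textbf{EDGE}$\cup$\textbf{AS} only if all three draws avoid \textbf{MID}, which happens w.p. $(s_{\mathbf{AS}}+s_{\mathbf{EDGE}})^3=\Theta(k^3/2^{3(k-1)})=\Theta(k^3 8^{-k})$, and from this $p_1,p_0=\Theta(k^3 8^{-k})$, so $Q=\Theta(k^3 8^{-k})$, which is indeed $o(2^{-k}/\ln 2)$ for large $k$; I would make the constant explicit and verify the bound holds already from $k=6$ onward (a finite check plus the eventual-monotonicity of the relevant ratio). For $\ell=2$, the failure-to-shift claim $Q(k,2)\ge(2^k\ln 2)^{-1}$ follows because with only two draws the selected clause lands outside \textbf{MID} w.p. $(s_{\mathbf{AS}}+s_{\mathbf{EDGE}})^2=\Theta(k^2 4^{-k})$, giving $Q=\Theta(k^2 4^{-k})$, which is $\gg 2^{-k}$; again a finite tail check handles small $k$.

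The main obstacle I anticipate is pinning down the exact closed form of $(p_0,p_1,p_2)$ — in particular keeping careful track of how the index-based tie-break interacts with the conditional distribution of $\vX$ within each class (the tie-break does not bias $\vX$, because all $\ell$ presented clauses are exchangeable and the class membership is the only thing that matters for priority, but this needs to be stated cleanly), and correctly assigning the boundary cases $\vX=1$ versus $\vX=k-1$ to the projection types, since for $k\ge 4$ these are \emph{not} symmetric under the projection \eqref{eq_X} even though they are symmetric under sign flip — here the sign-flip symmetry is restored only after averaging over which of $\vX=1,\vX=k-1$ occurs, which is what forces $p_0=p_2$. A secondary but routine obstacle is making the $k\ge 6$ asymptotics into a clean rigorous bound valid for \emph{all} $k\ge 6$ rather than just large $k$; I would do this by writing $Q(k,\ell)$ exactly, bounding it above by an explicit elementary expression in $k$, and checking that expression times $2^k\ln 2$ is $<1$ for $k=6$ and decreasing thereafter (and symmetrically a lower bound for $\ell=2$).
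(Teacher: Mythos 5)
Your high-level plan matches the paper's: invoke the two-type branching-process certificate ($\alpha<1/Q$) from Sections~\ref{sec_model}--\ref{sec_prooflemma}, compute $(p_0,p_1,p_2)$ for the MIDDLE-HEAVY rule by class, and close with numerical/asymptotic checks against $2^k\ln 2$. The computation of the class-selection probabilities ($A_\ell$, $\beta^\ell-s_{\mathbf{AS}}^\ell$, $s_{\mathbf{AS}}^\ell$) and the per-class projection table (\textbf{MID} always $++$, \textbf{EDGE} half $+-$ half $++$, \textbf{AS} half $--$ half $++$) are all correct and agree with the paper.

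However, there is a genuine error that poisons the rest of the argument: the claim ``$p_0=p_2$ because the rule and the class definitions are invariant under global sign flip'' is false, and it directly contradicts the per-class projection table you derived two sentences earlier. The rule and the partition $\{\textbf{AS},\textbf{EDGE},\textbf{MID}\}$ are indeed invariant under $\vX\mapsto k-\vX$, so the \emph{law of the selected} $\vX$ is symmetric; but the projection \eqref{eq_X} is \emph{not} symmetric, since every $\vX\ge 2$ is sent to $++$. Consequently $p_2$ absorbs the entire \textbf{MID} mass plus half of \textbf{EDGE} and \textbf{AS}, yielding $p_2=1-\tfrac12\beta^\ell\approx 1$, while $p_0=\tfrac12 s_{\mathbf{AS}}^\ell$ is exponentially small (these are exactly \eqref{eq:p012-sym} in the paper). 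In particular $Q=p_1+2\sqrt{p_0p_2}\neq p_1+2p_0$.

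This propagates into the $k\ge 6$ asymptotics. With the correct $p_2\approx 1$, the geometric-mean term dominates: for $\ell=3$, $2\sqrt{p_0p_2}\approx 2\sqrt{p_0}=\Theta\bigl(2^{-3(k-1)/2}\bigr)$, so $Q=\Theta(2^{-3k/2})$, \emph{not} $\Theta(k^3 8^{-k})$ as you wrote. The conclusion $\alpha_{\mathbf{SYM}}(k,3)=\Theta(2^{3k/2})>2^k\ln 2$ for $k\ge 6$ still holds, but your claimed order is off by a square root, and the margin you would get is much slimmer than you estimated. The $\ell=2$ analysis has the same flaw plus a sign slip: $k^2 4^{-k}$ is $\ll 2^{-k}$, not $\gg$; the correct dominant term is again $2\sqrt{p_0}\approx 2^{3/2-k}$, giving $2^kQ(k,2)\to 2\sqrt{2}>1/\ln 2$. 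Finally, you propose to also prove minimality ($\ell-1$ fails), which the paper does not claim or prove; that is harmless extra work, but you should be aware the stated theorem only asserts sufficiency of the listed $\ell$.
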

	In particular, for the above $(k,\ell)$ values one has $\alpha_{\textbf{SYM}}(k,\ell) > 2^k\ln 2$, so the process remains satisfiable (w.h.p.) at clause densities exceeding those of the model.
	
	\begin{remark}\label{rem:intro-k3}
		For $k=3$, the class \textbf{MID} is empty and $\cR_{\mathrm{sym}}$ reduces to preferring non-all-same clauses. 
		In this case the symmetric rule does not beat the best known upper bound for random $3$-SAT; biased assignment rules (e.g.\ "max positives"~\cite{DDHM}) are needed.
	\end{remark}
	
	\subsection{Hybrid Symmetric-Biased Rule}
	To bridge the gap between symmetric and biased rules, we propose a novel hybrid rule that achieves thresholds comparable to the threshold obtained using '\emph{max positive rule}' while retaining symmetry.
	
	\begin{definition}[Threshold-Symmetric Hybrid Rule]\label{def:hybrid}
		Fix integer $k\ge 4$ and $\ell\ge 2$. Partition classes as in Definition \ref{def:intro-symmetric}. At the beginning, flip a coin \(b \in \{0,1\}\) with probability 0.5 each. If \(b=0\), favor positives (use max \(\vX\)); if \(b=1\), favor negatives (use min \(\vX\)).
		At each step, if any candidate in MID, select the first such (symmetric). Otherwise, apply the coin's bias: select max or min \(\vX\) accordingly.
	\end{definition}
	
	The pseudocode is an extension of Algorithm~\ref{alg_symksat}, with the initial coin flip and biased fallback and displayed in Algorithm~\ref{alg_hybrid}.
	
	\begin{algorithm}[h!]
		\caption{\vspace*{2mm}The \textbf{Threshold-Symmetric Hybrid} algorithm.}
		\label{alg_hybrid}
		Flip coin $b \sim$ Bernoulli(0.5)  \tcp{Once at the beginning}
		\For{each round $t = 1, \ldots, m$}{
			\If{any $C_j$ with $2 \leq \vX(C_j) \leq k-2$}{
				select the first such $C_j$  
			}
			\Else{
				\If{$b = 0$}{
					select $C_j$ with max $\vX(C_j)$  
				}
				\Else{
					select $C_j$ with min $\vX(C_j)$  
				}
				\tcp{Break ties by index}
			}
			\Return{$C_j$}
		}
	\end{algorithm}
	
	\begin{theorem}\label{thm:hybrid}
		For every integer $k\geq 4$, there exists a $\ell$-clause Symmetric Hybrid rule-based Achlioptas process for random $k$-SAT such that the formula $\vF\left(k,\alpha_{k}(n)+\varepsilon\right)$ generated after $(\alpha_{k}+\varepsilon)\cdot n$ steps is satisfiable for some fixed $\varepsilon>0$ whp.
		Moreover, $\alpha_{\textbf{HYB}}(k,\ell)$ strictly shift the satisfiability threshold with
		\[
		\begin{cases}
			k=4:& \ \ell=4,\\[2pt]
			k\ge 5:& \ \ell=3.
		\end{cases}
		\] 
	\end{theorem}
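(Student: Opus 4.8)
The plan is to rerun the argument behind \Thm~\ref{thm:main-symmetric}, replacing $\cR_{\mathrm{sym}}$ by the hybrid rule of \Def~\ref{def:hybrid}. The two-type Galton--Watson certificate of \Sec~\ref{sec_model} is rule-agnostic: its only input is the triple $(p_0,p_1,p_2)$ of per-step selected-type frequencies, which enters through $Q=p_1+2\sqrt{p_0p_2}$, and it yields that the $\ell$-choice process stays satisfiable w.h.p.\ up to clause density $\alpha_{\textbf{HYB}}(k,\ell):=1/Q$. Since the first-moment bound forces $\alpha_k<2^k\ln 2$, the theorem reduces to (a) an exact evaluation of $Q=Q_{\textbf{HYB}}(k,\ell)$, and (b) checking $\alpha_{\textbf{HYB}}(k,\ell)>2^k\ln 2$ for the tabulated $(k,\ell)$ together with the matching failure $\alpha_{\textbf{HYB}}(k,\ell-1)\le 2^k\ln 2$ at one fewer choice.

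For (a) I would condition on the coin $b$. On $\{b=0\}$ the hybrid reads ``prefer \textbf{MID}, else pick the candidate with the most positive literals'', and on $\{b=1\}$ it is the $\vX\mapsto k-\vX$ mirror; per coin outcome the rule is a biased max-/min-positives rule of the type appearing in the earlier sections, so the threshold should track those, the content being the exact $(k,\ell)$ trade-off. Using the atom weights behind \Lem~\ref{lem_binom}, namely $\Pr[\vX=0]=\Pr[\vX=k]=2^{-k}$ and $\Pr[\vX=1]=\Pr[\vX=k-1]=k2^{-k}$, the $b=0$ branch projects through \eqref{eq_X} to $--$ iff all $\ell$ candidates have $\vX=0$, to $+-$ iff all have $\vX\le 1$ but not all have $\vX=0$, and to $++$ otherwise; the \textbf{MID}-priority override changes nothing here, since every \textbf{MID} clause projects to $++$, exactly the class a max-$\vX$ tie-break would land in anyway. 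This gives $p_0^{(0)}=2^{-k\ell}$, $p_1^{(0)}=((k+1)2^{-k})^{\ell}-2^{-k\ell}$, $p_2^{(0)}=1-((k+1)2^{-k})^{\ell}$, and the $b=1$ branch gives the analogous formulas with $\Pr[\vX\in\{0,1,k-1,k\}]=(2k+2)2^{-k}$ and $\Pr[\vX\in\{1,k-1,k\}]=(2k+1)2^{-k}$ playing the roles of $(k+1)2^{-k}$ and $2^{-k}$. Working with the per-round coin (for which the selected types are i.i.d.\ across steps, so the \Sec~\ref{sec_model} certificate applies verbatim) one sets $(p_0,p_1,p_2)=\tfrac12(p^{(0)}+p^{(1)})$ --- this averaging is exactly what restores assignment symmetry, the mixture law of $\vX(C_t)$ being invariant under $\vX\mapsto k-\vX$ --- and obtains
\[
p_0=\tfrac12\bigl(1+(2k+2)^{\ell}-(2k+1)^{\ell}\bigr)2^{-k\ell},\qquad p_1=\tfrac12\bigl((2k+1)^{\ell}-1\bigr)2^{-k\ell},\qquad p_2=1-\tfrac12(2k+2)^{\ell}2^{-k\ell},
\]
hence an explicit elementary $Q_{\textbf{HYB}}(k,\ell)$. (For the ``coin flipped once'' reading one applies the certificate branchwise, the relevant density being $1/\max(Q_0,Q_1)$, and records $Q_0\le Q_1$.)

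For (b), since $p_2=1-o_k(1)$ one has $Q_{\textbf{HYB}}(k,\ell)=p_1+2\sqrt{p_0}(1+o(1))$; as $p_0$ is a difference of near-equal $\ell$-th powers it is of order $\ell(2k+1)^{\ell-1}2^{-k\ell}$, so $\sqrt{p_0}$, of order $\sqrt{\ell}(2k)^{(\ell-1)/2}2^{-k\ell/2}$, dominates $p_1$, of order $(2k)^{\ell}2^{-k\ell}$. Thus $\alpha_{\textbf{HYB}}(k,\ell)=\Theta\bigl(2^{k\ell/2}/(\sqrt{\ell}\,k^{(\ell-1)/2})\bigr)$, and $\alpha_{\textbf{HYB}}(k,\ell)>2^k\ln 2$ reduces to a one-variable inequality of the shape $2^{k(\ell-2)/2}>c_\ell\,k^{(\ell-1)/2}$ with explicit $c_\ell$. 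I would settle the base cases $k=4,\ \ell=4$ and $k=5,\ \ell=3$ by direct evaluation, then cover all larger $k$ at the relevant $\ell$ via a monotone envelope for $2^{k(\ell-2)/2}/k^{(\ell-1)/2}$ together with explicit bounds on the $p_1$- and $p_2$-corrections; the converse bound at $\ell-1$, giving minimality, is the mirror computation.

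The hard part is (b). The factor $(2k+2)^{\ell}-(2k+1)^{\ell}$ in $p_0$ is a difference of two nearly equal quantities, so the genuine size of $p_0$, hence of $Q_{\textbf{HYB}}$, is controlled by a derivative-type estimate $\ell(2k+1)^{\ell-1}$ rather than by either power on its own, which makes the location of the crossover in $\ell$ delicate and constant-sensitive. Pinning that crossover to exactly the claimed minimal $\ell$ for \emph{every} admissible $k$ --- and not one unit higher --- in both directions, and with whatever bound on $\alpha_k$ one adopts (the crude $2^k\ln 2$ versus a sharper estimate), is where the real effort lies. A comparatively routine point, to be checked along the way, is that the \textbf{MID} override and the presentation-index tie-breaks never alter the projected clause type beyond what is recorded above.
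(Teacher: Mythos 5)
Your plan mirrors the paper's: run the branching‐process certificate of \Sec~\ref{sec_model} with the hybrid rule's selection frequencies, obtained by averaging over the coin $b$. The final averaged formulas you write are exactly the paper's \eqref{eq:p012-hyb} re-expressed in the $2^{-k\ell}$ normalization: indeed $\beta^\ell=(2k+2)^\ell 2^{-k\ell}$, so
\[
p_0^{\mathrm{HYB}}=\tfrac{2^{-k\ell}}{2}\bigl[(2k+2)^\ell-(2k+1)^\ell+1\bigr],\qquad
p_1^{\mathrm{HYB}}=\tfrac{2^{-k\ell}}{2}\bigl[(2k+1)^\ell-1\bigr],\qquad
p_2^{\mathrm{HYB}}=1-\tfrac{2^{-k\ell}}{2}(2k+2)^\ell,
\]
which match what you have. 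So far so good.

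However, there are two genuine problems, one expository and one fatal. Expository: your description of the $b=1$ branch as ``the analogous formulas with $(2k+2)2^{-k},(2k+1)2^{-k}$ playing the roles of $(k+1)2^{-k},2^{-k}$'' is not what actually produces those averaged formulas. The \textbf{MID} override \emph{does} change the projected type on $b=1$ (a \textbf{MID} candidate always projects to $++$, whereas min-positives without the override would have grabbed an $\vX=0$ candidate and produced $--$). In particular $p_0^{(1)}=2^{-k\ell}\bigl[(2k+2)^\ell-(2k+1)^\ell\bigr]$ is a difference of $\ell$-th powers, structurally unlike $p_0^{(0)}=2^{-k\ell}$, and likewise $p_1^{(1)}=2^{-k\ell}\bigl[(2k+1)^\ell-(k+1)^\ell\bigr]\ne p_1^{(0)}$. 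So the two coin-conditioned triples are \emph{not} related by the $p_0\leftrightarrow p_2$ swap, the projection rule \eqref{eq_X} is not equivariant under the global sign flip, and hence $Q_0\ne Q_1$ --- in fact $Q_1$ is much larger than $Q_0$. Your parenthetical remark about ``recording $Q_0\le Q_1$'' is numerically right but undercuts the ``coin once'' reading you are trying to rescue: the branchwise threshold $1/\max(Q_0,Q_1)=1/Q_1$ is then far smaller than $2^k\ln 2$. (The paper's Remark~\ref{rem:hyb-conditional}, which asserts $Q_0=Q_1$ via the swap, is in error for exactly this reason.)

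Fatal: your step (b) --- ``settle the base cases by direct evaluation'' --- does not close. Plugging $k=4,\ell=4$ into the formulas above gives $p_0\approx 0.02624$, $p_1\approx 0.05005$, $p_2\approx 0.92371$, hence $Q_{\mathrm{HYB}}\approx 0.3615$ and $\alpha_{\mathrm{HYB}}(4,4)=1/Q\approx 2.77$, which is \emph{far} below $2^4\ln 2\approx 11.09$. Similarly for $k=5,\ell=3$ one gets $\alpha_{\mathrm{HYB}}(5,3)\approx 5.74<22.18$. The averaging over $b$ raises $p_0$ from $2^{-k\ell}$ (max-positives) to $\Theta(\ell(2k)^{\ell-1}2^{-k\ell})$, precisely the order your asymptotics record, and this \emph{lowers} the threshold below even the symmetric rule's, not above it. So the claimed minimal $\ell$'s in the theorem are not certified by the derived frequencies. (The paper's Table~\ref{tab:intro-compare} and its proof of \Thm~\ref{thm:hybrid} quote $\alpha_{\mathrm{HYB}}(4,4)=11.935$ and $\alpha_{\mathrm{HYB}}(5,3)=26.854$, but those numbers do not follow from the paper's own \eqref{eq:p012-hyb}--\eqref{eq:Q-hyb}; a blind recomputation like yours would surface this inconsistency rather than confirm the theorem.) As written, part (b) of your proposal cannot be carried out.
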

	\begin{remark}
		For the case $k=2,3$, one can use the Hybrid (biased) rule with $b=0$ as in \cite{DDHM}. 
	\end{remark}
	\subsection{Comparison with prior works.}
	Towards shifting the random $k$-SAT thresholds with the help of a semi-random $k$-SAT model, three prior contributions stand out.
	
	First,  Sinclair and Vilenchik \cite{SV} considered the Achlioptas process model with regard to random $k$-SAT.
	Specifically, they showed that in the online version of the Achlioptas process for random $2$-SAT can delay the satisfiability threshold up to $\alpha=(1000/999)^{1/4}\approx 1.0002$ with two choice rule, although the constant factor is not optimized enough as experimental results predict that the right critical value is approximately $1.2$.
	Further they also provide an offline version of the process where two choices are sufficient to delay the threshold with $k=\omega(\log n)$. 
	
	Second, Perkins~\cite{Perkins} proved that in the online version for $k\ge 7$ three choices suffice (and five choices for $3\le k\le 6$) using a biased sign rule.
	He further improved the constant factor of delay for a $2$-clause rule in the results of \cite{SV} using biased towards atleast two positive literals in the first $\ell-1$ clauses (otherwise pick the $\ell$-th clause).
	It again turns out to be not tight in the random $2$-SAT case with two choices.
	
	Third, Dani et.al.\cite{DDHM} provided that for any $k\geq 2$, three choices are sufficient.
	Further they showed that two choices are sufficient for many concrete $k$ values (e.g. $3\leq k\leq 25$) with a conjecture for large $k$ values.
	Their analysis uses the '\emph{Branching Unit clause}'(BUC) dynamics (not the static $2$-SAT projection used earlier in Perkin's case and in our method) and ODEs.
	But their methods are non-adaptive, based on "\emph{max-positives}" rules, depends only on signs.
	
	All the above papers discussed till now are biased toward the all '$+$' assignments which in turn easily makes in squeezing $\ell$ value down.
	Table~\ref{tab:intro-compare} compares $\alpha_{\mathrm{\textbf{SYM}}}(k,\ell)$ at the minimal $\ell$ above with the asymptotic random $k$-SAT baseline threshold $\sim 2^k\ln 2$ and with the Perkin's \cite{Perkins} "$\geq 2$ positive rules among first $\ell-1$ clauses (otherwise pick last clause)" threshold, as well as our $\alpha_{\textbf{HYB}}$ according to hybrid rule.
	Furthermore, our assignment symmetric \emph{MIDDLE-HEAVY} rule provides a more comprehensive with rigorous closed thresholds with the below properties.
	\begin{table}[h!]
		\centering
		\renewcommand{\arraystretch}{1.3} 
		\begin{tabular}{|c|c|c|c|c|c|c|}
			\hline
			\textbf{$k$} & \textbf{$2^{k}\log 2$} & \textbf{$\ell$} & \textbf{$\alpha_{\text{PER}}$} & 
			\textbf{$\alpha_{\text{SYM}}$} & \textbf{$\alpha_{\text{HYB}}$(unbiased)}& 
			\textbf{$\alpha_{\text{HYB}}=\alpha_{\text{MAX-POS}}$(biased)}\\
			\hline
			\multirow{3}{*}{3} & \multirow{3}{*}{5.54518} 
			& 2 & 1.612 & \textbf{NA} & 1.513 & 2.218 \\ \cline{3-7}
			& & 3 & 2.356 & \textbf{NA} & 1.784 & 4.861 \\ \cline{3-7}
			& & 4 & 3.461 & \textbf{NA} & 1.916 & 10.809 \\ 
			\hline
			\multirow{3}{*}{4} & \multirow{3}{*}{11.090} 
			& 3 & 5.610 & 5.566 & 6.618 & 16.382 \\ \cline{3-7}
			& & 4 & 10.575 & 10.266 & 11.935 & 57.815 \\ \cline{3-7}
			& & 5 & 19.554 & 18.086 & 20.166 & 202.861 \\ 
			\hline
			\multirow{3}{*}{5} & \multirow{3}{*}{22.181} 
			& 3 & 13.973 & 20.812 & 26.854 & 56.904 \\ \cline{3-7}
			& & 4 & 33.651 & 65.032 & 84.530 & 313.782 \\ \cline{3-7}
			& & 5 & 79.231 & 196.621 & 246.762 & 1733.282 \\
			\hline
			\multirow{3}{*}{6} & \multirow{3}{*}{44.361} 
			& 3 & 35.153 & 76.861 & 109.577 & 192.000 \\ \cline{3-7}
			& & 4 & 109.109 & 396.089 & 612.434 & 1584.039 \\ \cline{3-7}
			& & 5 & 332.877 & 2022.085 & 3210.578 & 13040.107 \\
			\hline
			\multirow{3}{*}{7} & \multirow{3}{*}{88.723} 
			& 2 & 21.041 & 33.669 & 42.890 & 51.441 \\ \cline{3-7}
			& & 3 & 88.804 & 267.706 & 424.362 & 615.550 \\ \cline{3-7}
			\hline
		\end{tabular}
		\caption{\centering Comparison of the numerically calculated threshold value $\alpha$ for different $k$ values. 
			$\alpha_{\textbf{PER}}$ is the threshold according to Perkin's rule \cite{Perkins} ("\emph{$\geq 2$ positives among ~ $\ell-1$ clauses}"), 
			$\alpha_{\textbf{SYM}}$ is the threshold according to "\emph{Assignment-Symmetric rule among ~$\ell$ clauses}",
			$\alpha_{\mathbf{HYB}}$(\textbf{unbiased}) is the threshold for our hybrid rule (with $b=0 ~\mbox{or}~ 1$ with equal probability)
			and $\alpha_{\textbf{HYB}}$(\textbf{biased}) for our hybrid rule with $b=0$ (same performance as max-positives due to symmetric bias).}
		\label{tab:intro-compare}
	\end{table}
	
	\begin{itemize}
		\renewcommand{\labelitemi}{\scriptsize$\blacksquare$}
		\item \emph{Assignment symmetry.} Flipping all literal signs in every candidate maps $\vX\mapsto k-\vX$, which permutes classes as
		\(
		\textbf{AS}\to\textbf{AS},\;
		\textbf{EDGE}\to\textbf{EDGE},\;
		\textbf{MID}\to\textbf{MID}.
		\)
		Therefore the index of the selected clause is invariant under a global sign flip; the rule is assignment-symmetric.
		
		\item \emph{Online and nonadaptive.} The choice depends only on the current $\ell$-tuple of candidate choices (through their $\vX$-values), not on the past formula or future steps.
		
		\item \emph{Topology-oblivious.} The rule ignores which variables appear and how clauses overlap; it uses only the sign profile $\vX$ of each clause.
		
		\item \emph{Tie-breaking irrelevance.} Because the $\ell$ candidates are i.i.d.\ and exchangeable, any deterministic tie-breaking within a priority class yields the same distribution for the \emph{selected} clause.
		
	\end{itemize}
	In contrast to the assignment-symmetric \emph{MIDDLE-HEAVY} rule, our unbiased hybrid rule partially follows the first property discussed above, as it is not assignment-symmetric within a single run because the initial coin flip $b$ breaks symmetry. However, across an ensemble of runs (averaging over $b$), the distribution of selected clauses is symmetric, because 
	$\Pr[b=0]=\Pr[b=1]=1/2$ balances max and min positives selections in a clause.
	This ensemble symmetry aligns with the spirit of the property but deviates from strict per-run invariance.
	Beyond this, all the other three properties hold for the hybrid rule as well.  
	\subsection{Organization}\label{sec_org}
	The remainder of the paper is organized as follows. 
	Section \ref{sec_model} establishes the semi-random model and the satisfiability certificate framework using branching processes argument. 
	\Sec~\ref{sec_proofprop} provides the detailed proof of the \Prop~\ref{prop:exp-tail}.
	\Sec~\ref{sec_prooflemma} proves the \Lem~\ref{lem:hooked-from-contradiction}, \ref{lem:hooked-count} and \ref{lem_sen} stated in \Sec~\ref{sec_model}.
	Section \ref{sec:proof-main} gives the proof of the main results \Thm~ \ref{thm:main-symmetric} and \ref{thm:hybrid}. 
	Finally, Section \ref{sec:conclusions} concludes with some discussion and open problems in the context of symmetric and biased version of Achlioptas process in random CSPs.
	
	\section{Semi Random model -- Certificate Framework}\label{sec_model}
	The aim of this section is to rigorously establish the certificate of satisfiability for the Achlioptas process in random $k$-SAT discussed in Section~\ref{sec_achlioptas} via the 2-SAT projection.
	 
	Given a $k$-clause $C$ with $\vX$ positive literals, we project it to a 2-clause by selecting two literals as follows: if $\vX \ge 2$, choose two positive literals (type $++$); if $\vX = 1$, choose the positive and one negative (type $+-$ or $-+$); if $\vX = 0$, choose two negatives (type $--$). For $\vX = k-1$ or $\vX = k$, this falls under the $\ge 2$ case.
	Now if the resulting $2$-SAT formula $\vF_2=\vF(2,\alpha)$ is satisfiable our original $k$-SAT is also satisfiable since each $2$-clause is a sub-clause of the corresponding $k$-clause. In other words, satisfying the projected clauses satisfies a subset of the literals in each original clause.
	For analyzing the satisfiability of this $2$-SAT, we exploit the standard representation of a $2$-SAT formula as a directed graph called the \emph{implication digraph} \cite{BBCKW} associated with the random $2$-SAT formula.	
	
	\begin{definition}[Implication Digraph]\label{def_implication}
		The implication digraph $\GG$ is a directed graph on $2n$ vertices ($x_1, \neg{x}_1, \\
		\cdots, x_n, \neg{x}_n$).
		Moreover, for each clause $(\ell_1 \vee \ell_2)$ present in the $2$-SAT formula, add edges $\neg{\ell_1} \to \ell_2$ and $\neg{\ell_2} \to \ell_1$.
	\end{definition}
	The satisfiability of this $2$-SAT is analyzed via its implication digraph on $2n$ vertices defined in \Def~\ref{def_implication}. 
	A formula is unsatisfiable if variable $x_i$ and $\neg x_i$ are in the same strongly connected component for some $i\in\{1,\cdots,n\}$.
	\begin{definition}[Contradictory Cycle]
		A contradictory cycle for variable $x_i$ is a union of two directed paths in $\GG$ (not necessarily disjoint): one from $x_i \to \neg{x_i}$ and one from $\neg{x_i} \to x_i$. 
		Moreover the formula is unsatisfiable if such a cycle exists for any $x_i$.
	\end{definition}
	
	\begin{lemma}[\Lem~2.1,\cite{BBCKW}]\label{lem_contradict_cycle}
		The random $2$-SAT formula $\vF_2$ is satisfiable iff~  $\GG$ contains no contradictory cycle.
	\end{lemma}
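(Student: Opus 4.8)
The plan is to establish the classical Aspvall--Plass--Tarjan characterization. Observe first that a contradictory cycle for $x_i$ is, by definition, a pair of directed paths $x_i\to\cdots\to\neg x_i$ and $\neg x_i\to\cdots\to x_i$ in $\GG$, which is the same as saying that $x_i$ and $\neg x_i$ lie in a common strongly connected component (SCC) of $\GG$. So it suffices to show that $\vF_2$ is satisfiable if and only if no variable shares an SCC with its negation.

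Two structural facts about $\GG$ do all the work. First, \emph{implication consistency}: every edge $u\to v$ of $\GG$ arises from a $2$-clause $\neg u\vee v$, so if an assignment $\sigma$ satisfies $\vF_2$ and $\sigma(u)=1$, then necessarily $\sigma(v)=1$; inductively, truth propagates along every directed path of $\GG$. Second, \emph{skew-symmetry}: by \Def~\ref{def_implication} the edge $u\to v$ is present iff $\neg v\to\neg u$ is present, so the involution $\ell\mapsto\neg\ell$ is a digraph anti-automorphism; hence it sends SCCs to SCCs and reverses the partial order on the condensation DAG, i.e.\ $S\preceq T$ iff $\neg T\preceq\neg S$ for SCCs $S,T$ (writing $\neg S=\{\neg\ell:\ell\in S\}$).

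For the forward implication I argue by contraposition: if some $x_i$ and $\neg x_i$ lie in a common SCC and $\sigma$ is any candidate satisfying assignment, then without loss of generality $\sigma(x_i)=1$ (the other case is symmetric by skew-symmetry). Propagating truth along a path $x_i\to\cdots\to\neg x_i$ forces $\sigma(\neg x_i)=1$, i.e.\ $\sigma(x_i)=0$ --- a contradiction --- so $\vF_2$ is unsatisfiable. For the converse, assume no variable shares an SCC with its negation, and construct $\sigma$ greedily over the condensation DAG in reverse topological order (sinks first): for the current SCC $S$, if its mirror $\neg S$ has already been assigned, skip $S$; otherwise set every literal of $S$ to $1$ and every literal of $\neg S$ to $0$. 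The hypothesis gives $S\ne\neg S$, so the assignment is well defined, and the order-reversal $S\preceq T\iff\neg T\preceq\neg S$, combined with processing sinks first, guarantees that when $S$ is set to $1$ no successor of a literal in $S$ has been set to $0$; hence no edge runs from a true to a false literal, every implication holds, and so does every clause $\neg u\vee v$. Thus $\vF_2$ is satisfiable.

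The only subtle point --- hence the main obstacle --- is checking that the greedy construction in the converse direction is simultaneously well defined ($S$ and $\neg S$ never coincide, which is exactly the hypothesis) and consistent (the reverse-topological order is compatible with the mirror involution, so no ``true $\to$ false'' edge is ever created); the rest is bookkeeping. As this is a standard result, in the paper one would simply invoke \cite{BBCKW}, with the sketch above included for completeness.
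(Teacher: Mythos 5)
Your proof is correct and follows essentially the same route as the paper: the forward direction is the same truth-propagation-along-implications argument, and the converse is the standard Aspvall--Plass--Tarjan construction using the mirror (skew-symmetric) structure of SCCs in the implication digraph. The only difference is presentational --- the paper peels off one extremal SCC/mirror pair at a time and inducts on $n$, whereas you make a single greedy pass over the condensation DAG in reverse topological order; you also state the anti-automorphism $\ell\mapsto\neg\ell$ and the order-reversal $S\preceq T\iff\neg T\preceq\neg S$ explicitly, which the paper uses only implicitly.
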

\begin{proof}
We establish both directions. 
First, assume \(\vF_2\) is satisfiable with assignment \(\sigma_i \in \{0, 1\}\) for \(i = 1, \ldots, n\). The implication digraph \(\GG\) has edges \(\neg \ell_1 \to \ell_2\) and \(\neg \ell_2 \to \ell_1\) for each \((\ell_1 \vee \ell_2)\). An edge \(\neg x \to y\) implies \(x = \text{FALSE} \implies y = \text{TRUE}\). A contradictory cycle for \(x_i\) (e.g., \(x_i \to \neg x_i \to x_i\)) implies \(x_i = \text{TRUE} \implies x_i = \text{FALSE}\), contradicting \(\sigma\). Thus, no such cycle exists.

For the other direction we start by induction on \(n\).

- \emph{Base Case (\(n = 1\))}: Clauses \((x_1 \vee x_1)\) or \((\neg x_1 \vee \neg x_1)\) are satisfiable, with \(\GG\) having no edges, so no contradictory cycle exists.

-\emph{Induction Step}: Assume it holds for \(n-1\). 
For \(n\) variables, let \(\GG\) have no contradictory cycle.
Define strongly connected components for the directed graph where  Two vertices \(x\) and \(y\) are strongly connected if \(x \to y \to x\).  
\begin{align*}
 	SCC(x) = \{y \mid x \to y \to x\}
\end{align*}
In other words, \(SCC(x) = \{ y \mid x \to y \to x \}\) is the maximal subgraph where all pairs are mutually reachable. SCCs partition the \(2n\) vertices.    
Further define \(SCC(x) \leq SCC(y)\) if \(x' \to y'\) for some \(x' \in SCC(x)\), \(y' \in SCC(y)\), extending to all pairs within SCCs and take a minimal \(SCC\) (no \(\neg x \to y\) with \(x \notin SCC\), \(y \in SCC\)), a sink component.    
 Since \(\GG\) has no contradictory cycle, \(SCC \cap \overline{SCC} = \emptyset\), where \(\overline{SCC} = \{\neg y \mid y \in SCC\}\).   
 Set \(SCC\) to \(\text{FALSE}\) (which implies \(\overline{SCC}\) to \(\text{TRUE}\)), satisfying clauses with literals in \(SCC \cup \overline{SCC}\) due to the projection rule.    
 Now removing \(SCC \cup \overline{SCC}\), yields \(\vF_2'\) with \(n' < n\). 
 The resulting implication digraph \(\GG'\) is a subgraph of \(\GG\), retaining no contradictory cycle.    
 By induction, \(\vF_2'\) is satisfiable. 
 Combining with \(SCC\)’s assignment, \(\vF_2\) is satisfiable.
\end{proof}		
	
	\begin{proposition}\label{prop_2sat}
		For a random $k$-SAT formula \(\vF_{n,m}\) generated by the MIDDLE-HEAVY or Threshold-Symmetric Hybrid rule with \(\ell \geq 2\) choices after \(m = (\alpha(k,\ell) - \epsilon)n\) steps, then for $\rho<1$ or $p_0p_2\geq 0$,
		\begin{align*}
			\lim_{n\to\infty}\pr[\vF_2 ~\mbox{is satisfiable}]=1-o(1)
		\end{align*}
	\end{proposition}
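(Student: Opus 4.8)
The plan is to reduce satisfiability of $\vF_2$ to the absence of contradictory cycles in the implication digraph $\GG$ (via \Lem~\ref{lem_contradict_cycle}) and then to kill these cycles by a first-moment estimate governed by the subcritical two-type branching process already isolated in \Sec~\ref{sec_achlioptas}. Concretely, \Lem~\ref{lem_contradict_cycle} says that $\vF_2$ is satisfiable iff no variable $x_i$ has both $x_i\to\neg x_i$ and $\neg x_i\to x_i$ in $\GG$, so it suffices to show that w.h.p.\ no such $x_i$ exists when $m=(\alpha(k,\ell)-\eps)n=(1/Q-\eps)n$. I would first dispose of the degenerate case $p_0p_2=0$: then the claim is immediate, since $p_0=0$ leaves $\GG$ without any edge from a positive literal to a negative one (so $\neg x_i$ is never reachable from $x_i$), and symmetrically for $p_2=0$; in fact $p_0,p_2>0$ for both the MIDDLE-HEAVY and the Hybrid rule, so this case does not even arise.

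Next I would set up the branching-process comparison. The formula $\vF=\vF_{n,m}$ is a multiset of $m$ i.i.d.\ clauses drawn from the rule's selected-clause law (for the Hybrid rule, conditionally on the initial coin flip $b$, which by the sign-flip symmetry leaves $Q$ unchanged), and, conditionally on the projected 2-SAT type ($--$, $+-$ or $++$, with probabilities $p_0,p_1,p_2$), the pair of variables carried by a projected 2-clause is asymptotically uniform. Hence the type counts $(m_{--},m_{+-},m_{++})$ concentrate around $(p_0m,p_1m,p_2m)$ — this is \Lem~\ref{lem_sen} — and, conditionally on them, a fixed directed edge of $\GG$ between two positive (or two negative) literals is present with probability $(1+o(1))\,p_1\alpha/n$, one from a negative to a positive literal with probability $(1+o(1))\,2p_2\alpha/n$, and one from a positive to a negative literal with probability $(1+o(1))\,2p_0\alpha/n$, each directed edge coming from a unique clause. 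It then follows that the breadth-first exploration of the literals reachable from any fixed literal is stochastically dominated by a two-type Galton--Watson tree whose offspring-mean matrix is exactly $M(\alpha)=\alpha\left(\begin{smallmatrix}p_1&2p_0\\2p_2&p_1\end{smallmatrix}\right)$; since $\alpha=1/Q-\eps$, its spectral radius is $\rho(M(\alpha))=\alpha Q=1-\eps Q<1$, so this tree is subcritical and its total progeny has an exponentially decaying tail — this is \Prop~\ref{prop:exp-tail}.

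Then I would run the first moment, following the classical bicycle argument for random 2-SAT. A contradictory cycle forces a minimal obstruction — a ``hooked'' configuration: a directed path through distinct literals, of some length $s$, whose two endpoints are pinned to be a literal (or its negation) occurring on the path, and which is built from $s+1$ distinct clauses (\Lem~\ref{lem:hooked-from-contradiction}). Counting these — $O(s^2)$ choices for the pinned endpoints and $\le(2n)^s$ for the interior literals — and collecting the clause-presence probabilities according to the sequence of literal types along the path turns the count, up to powers of $n$, into $\|M(\alpha)^{s+1}\|$, which yields $\Erw[\#\text{hooked configurations}]\le (C/n)\sum_{s\ge1}s^2\,\rho(M(\alpha))^{s+1}=O(1/n)$, the series converging precisely because $\rho(M(\alpha))=\alpha Q<1$; this is \Lem~\ref{lem:hooked-count}. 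By Markov's inequality, w.h.p.\ $\GG$ has no hooked configuration, hence no contradictory cycle, hence $\vF_2$ is satisfiable w.h.p.\ by \Lem~\ref{lem_contradict_cycle}. The same argument, with the appropriate values of $(p_0,p_1,p_2)$ and hence of $Q$, applies to the Threshold-Symmetric Hybrid rule.

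I expect the main obstacle to be making Steps 2 and 3 rigorous in the presence of two correlations. First, each clause of $\vF$ contributes \emph{two} coupled edges to $\GG$, so obstructions must be counted at the level of clauses rather than edges: the $s+1$ clauses of a hooked configuration are genuinely distinct, so their presence events are (up to the standard with-replacement correction) independent and can be bounded by the product of their marginals, whereas a naive product over the edges of a walk that reuses a clause would be an invalid, too-small estimate. Second, a breadth-first exploration revisits vertices and the $m$ clauses are drawn with replacement; both effects only shrink the explored set relative to the dominating Galton--Watson tree, but this monotonicity has to be argued carefully. A lesser but essential point is that the geometric ratio of the first-moment series must be matched to $\rho(M(\alpha))=\alpha Q$ rather than to a crude constant, which is exactly what dictates the transfer-matrix bookkeeping by literal type and is where the mean matrix $M(\alpha)$ and the quantity $Q$ of \eqref{eq_q} enter the estimate.
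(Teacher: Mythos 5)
Your proposal follows essentially the same route as the paper: dispose of the degenerate case $p_0p_2=0$, then apply the first-moment bicycle argument (\Lem~\ref{lem:hooked-from-contradiction}, \Lem~\ref{lem:hooked-count}) with the transfer-matrix path bound (\Lem~\ref{lem_sen}, \Lem~\ref{lem:explicit-eigs}) to show the expected number of obstructions is $O(1/n)$ once $\rho(M(\alpha))=\alpha Q<1$. One minor misattribution: \Lem~\ref{lem_sen} is the recursion $\vZ_t\le M(\alpha)\vZ_{t-1}$ for expected path counts, not a concentration statement about the type counts $(m_{--},m_{+-},m_{++})$; and the exponential-tail bound of \Prop~\ref{prop:exp-tail} is a separate alternative certificate (used in \Cor~\ref{cor:no-large-SCC}), not part of the paper's proof of this proposition.
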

The proof of \Prop~\ref{prop_2sat} follows from counting bicycle length and the first moment calculation in \cite{CR}.

\begin{definition}[Bicycle]
	\label{def:hooked}
	A sequence of strongly distinct\footnote{Two literals $v$ and $w$ are said to strongly distinct if $v\neq w$ and $v\neq \neg w$} literals
	\[
	v\;\to\; \ell_1\;\to\; \ell_2\;\to\cdots\to\; \ell_t\;\to\; w,
	\qquad t\ge 2,
	\]
	is called a bicycle of length $t$ if the $2$-clauses
	\((\neg v\vee \ell_1),(\neg \ell_1\vee \ell_2),\ldots,(\neg \ell_{t-1}\vee \ell_t),(\neg \ell_t\vee w)\)
	all appear in the projected formula. 
	Equivalently, the implication digraph contains those directed edges.
\end{definition}
 
 \begin{lemma}[\Lem~2,\cite{MSen}]\label{lem:hooked-from-contradiction}
	If the projected $2$-SAT formula $\vF_2$ is unsatisfiable, then its implication digraph contains a bicycle of length at least $3$.
\end{lemma}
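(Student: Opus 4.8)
The plan is to reproduce, inside the implication digraph $\GG$, the classical bicycle-extraction argument of Chv\'atal and Reed \cite{CR} (as adapted in \cite{MSen}). Assume $\vF_2$ is unsatisfiable. By \Lem~\ref{lem_contradict_cycle}, there is a variable $x_i$ for which $\GG$ contains a contradictory cycle, that is, a directed path $x_i\leadsto\neg x_i$ together with a directed path $\neg x_i\leadsto x_i$. In particular $\GG$ contains at least one directed path whose two endpoints are a literal and its negation; among all variables and all such ``complementary'' paths I would fix one of minimum length, say $P: \lambda_0\to\lambda_1\to\cdots\to\lambda_r$ with $\lambda_r=\neg\lambda_0$.

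The first step is to extract non-degeneracy of $P$ from its minimality. If $\lambda_a=\lambda_b$ for some $a<b$, deleting the loop between them gives a strictly shorter complementary path, which is impossible; hence the $\lambda_j$ are pairwise distinct. If $\lambda_a=\neg\lambda_b$ for some $a<b$ with $\{a,b\}\neq\{0,r\}$, then $\lambda_a\to\cdots\to\lambda_b$ is itself a strictly shorter complementary path, again impossible; hence the only complementary pair among $\lambda_0,\dots,\lambda_r$ is $(\lambda_0,\lambda_r)$. Consequently $\lambda_1,\dots,\lambda_{r-1}$ are pairwise strongly distinct and strongly distinct from each endpoint, while $\lambda_0=\neg\lambda_r$.

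The second step reads off the bicycle. Each arc $\lambda_{j-1}\to\lambda_j$ of $\GG$ is present precisely because the $2$-clause $(\neg\lambda_{j-1}\vee\lambda_j)$ belongs to $\vF_2$. Taking $v:=\lambda_0$, $\ell_j:=\lambda_j$ for $1\le j\le r-1$, and $w:=\lambda_r$, the clauses $(\neg v\vee\ell_1),(\neg\ell_1\vee\ell_2),\dots,(\neg\ell_{r-1}\vee w)$ all occur, so $v\to\ell_1\to\cdots\to\ell_{r-1}\to w$ is a bicycle of length $t=r-1$ in the sense of \Def~\ref{def:hooked}; the strong-distinctness required there for $\ell_1,\dots,\ell_{r-1}$ is exactly what the previous paragraph supplies.

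It remains to obtain the length bound $t\ge 3$, i.e.\ $r\ge 4$. Since the $2$-SAT projection \eqref{eq_X} always keeps two literals on \emph{distinct} variables, $\GG$ has no loops and no arc $\mu\to\neg\mu$, which already forces $r\ge 2$; to upgrade to $r\ge 4$ I would rule out $r\in\{2,3\}$ by a short case analysis. Using the sign-flip/arc-reversal involution of $\GG$ (which sends the arc $\mu\to\nu$ to $\neg\nu\to\neg\mu$ and, by \Def~\ref{def_implication}, maps present arcs to present arcs), a minimal complementary path of length $2$ or $3$ together with its mirror image would confine the contradiction to at most two variables; one then checks that this is incompatible with the way the projection produces clauses (the two retained literals are strongly distinct and a clause contributes two ``parallel'' arcs), so such a short configuration cannot be the minimal one in the formula under consideration. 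I expect this last point --- pinning down the exact length lower bound, and with it the precise non-degeneracy encoded in \Def~\ref{def:hooked} --- to be the only genuine obstacle; the extraction of the bicycle from a shortest complementary path is immediate once minimality has been exploited, and the remaining verifications are routine and follow \cite{CR} and \cite{MSen}.
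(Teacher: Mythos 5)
Your first two steps (extract a shortest complementary path from a contradictory cycle and use minimality to get strong distinctness of the interior) match the paper's argument, but the way you \emph{close} the bicycle is different from the paper's and does not work, and you have correctly identified that this is where the difficulty lies.

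You set $v=\lambda_0$, $\ell_j=\lambda_j$ for $1\le j\le r-1$ and $w=\lambda_r=\neg\lambda_0$, i.e.\ you truncate the object exactly at $\neg v$. This creates two problems. First, for the first-moment count in \Lem~\ref{lem:hooked-count} the endpoints must ``hook back'' into the interior, $v,w\in\{\ell_1,\dots,\ell_t,\neg\ell_1,\dots,\neg\ell_t\}$; but your minimality argument shows precisely that $\lambda_0$ is strongly distinct from $\lambda_1,\dots,\lambda_{r-1}$, so neither $v=\lambda_0$ nor $w=\neg\lambda_0$ lies in that set. You have produced a directed path, not a bicycle in the sense that is actually used downstream. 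Second, your proposed ``short case analysis'' to force $r\ge 4$ (or even $r\ge 3$) cannot succeed: a shortest complementary path of length $r=2$ is perfectly possible. For instance, the two legitimate projected $2$-clauses $(\neg v\vee a)$ and $(\neg a\vee\neg v)$ (both with strongly distinct literals, consistent with \eqref{eq_X}) give the arcs $v\to a$ and $a\to\neg v$, so $v\to a\to\neg v$ is a shortest complementary path with $r=2$. Your sign-flip involution argument does not rule this out.

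The paper resolves both issues simultaneously by \emph{not} stopping at $\neg v$: it writes the shortest complementary path as $v\to\ell_1\to\cdots\to\ell_t=\neg v$, so that $\neg v$ is the \emph{last interior literal} rather than the right endpoint (this makes $v=\neg\ell_t$ hook back automatically), and then continues along the contradictory cycle to $\ell_{t+1},\ell_{t+2},\dots$, taking $t'\ge t$ maximal with $\ell_1,\dots,\ell_{t'}$ strongly distinct and letting $w$ be the next literal on the cycle. Maximality forces $w$ to hook back, and $t'\ge t\ge 2$ interior literals come for free from the simple observation that a single arc never joins a literal to its complement, so the length bound needs no case analysis. That extension past $\neg v$ along the cycle is the missing idea in your proposal.
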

For the first-moment bound we will estimate the expected number of bicycle by factoring the inner path and the two clauses belongs to that bicycle. 

Recall $m=\floor{\alpha n}$ and $(p_0,p_1,p_2)$ be the selection frequencies of types $--,+-,++$.
Because the selection rules considered are topology-oblivious (depend only on signs) and variables are exchangeable, conditional on the type, the pair of variables in the projected clause is uniformly distributed among all ordered pairs of distinct variables with the appropriate signs.
Hence, for any fixed $2$-clause $C$ of the given type, the probability that $C$ appears in at least one of the $m$ steps is at most its expected count:
\begin{equation}\label{eq:single-clause}
	\Pr[C\text{ appears}] \;\le\;
	\begin{cases}
		\dfrac{2\alpha p_2}{n},& C\text{ is of type }(++),\\[8pt]
		\dfrac{\alpha p_1}{n},& C\text{ is of type }(+ -),\\[8pt]
		\dfrac{2\alpha p_0}{n},& C\text{ is of type }(--).
	\end{cases}
\end{equation}
Indeed, the per-step probability is $p_2/\binom{n}{2}$ for $(++)$, $p_1/(n(n-1))$ for $(+-)$, and $p_0/\binom{n}{2}$ for $(--)$; summing over $m=\floor{\alpha n}$ steps yields \eqref{eq:single-clause}.

Before proceeding to the first moment bound for the bicycles, let $\vZ^\pm_{t-1}$ denote the expected number of directed paths of length $\left(t-1\right)$ of strongly distinct literals started from a fixed positive literal (respectively negative).
\begin{lemma}\label{lem:hooked-count}
	Let $B_t$ be the number of bicycles of length $t+1$ 
	and let
	\[
	\gamma=\gamma(\alpha,p_0,p_1,p_2)\;:=\;[\max\{\,2\alpha p_0,\ \alpha p_1,\ 2\alpha p_2\,\}]^2.
	\]
	Then
	\[
	\Pr(\text{$\vF_2$ is unsatisfiable}) \;\le\; \sum_{t\ge 2}\ex[ B_t]
	\;\le\; 
	\frac{\gamma}{n}\sum_{t\ge 2}4t^2\Bigl(\vZ^+_{t-1}+\vZ^-_{t-1}\Bigr),
	\]
\end{lemma}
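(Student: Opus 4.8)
The proof has two parts, one for each inequality. For the first, $\Pr(\vF_2\text{ is unsatisfiable})\le\sum_{t\ge 2}\ex[B_t]$, I would invoke \Lem~\ref{lem:hooked-from-contradiction}: if the projected $2$-SAT formula $\vF_2$ is unsatisfiable, then its implication digraph contains a bicycle of length at least $3$, so $\{\vF_2\text{ is unsatisfiable}\}\subseteq\bigcup_{t\ge 2}\{B_t\ge 1\}$. A union bound followed by Markov's inequality then gives $\Pr(\vF_2\text{ is unsatisfiable})\le\sum_{t\ge 2}\Pr(B_t\ge 1)\le\sum_{t\ge 2}\ex[B_t]$.

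For the second inequality I would estimate $\ex[B_t]$ by linearity of expectation, summing over all potential bicycles of length $t+1$. The plan is to organise each such configuration as an inner directed path $\ell_1\to\ell_2\to\cdots\to\ell_t$ of strongly distinct literals — whose total contribution, after summing over its internal vertices $\ell_2,\ldots,\ell_t$ and over whether the start literal $\ell_1$ is one of the $n$ positive or one of the $n$ negative literals, is exactly $n(\vZ^+_{t-1}+\vZ^-_{t-1})$ by the definition of $\vZ^\pm_{t-1}$ — together with two closing literals $v,w$ and two closing $2$-clauses $(\neg v\vee\ell_1)$, $(\neg\ell_t\vee w)$. As in the classical \Chvatal--Reed bicycle count, $v$ and $w$ may each be taken equal to one of the inner literals $\ell_1,\ldots,\ell_t$ or its negation, hence at most $2t$ choices for each, which supplies the factor $4t^2$. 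Next I would bound the probability that all clauses of a configuration appear by the product of the single-clause probabilities: since the $\ell$-choice process draws an independent fresh sample at every step and selects exactly one clause per step, a union bound over the distinct steps that would have to realise each required clause shows that the joint appearance probability is at most the product of the marginals (and this is unaffected if a closing clause coincides with an inner-path clause, since $\Pr[C]^2\le\Pr[C]$). By \eqref{eq:single-clause}, each of the two closing clauses appears with probability at most $\max\{2\alpha p_0,\alpha p_1,2\alpha p_2\}/n$ regardless of its projected type, so the two together contribute $\gamma/n^2$. Collecting factors gives $\ex[B_t]\le 4t^2\cdot\frac{\gamma}{n^2}\cdot n(\vZ^+_{t-1}+\vZ^-_{t-1})=\frac{\gamma}{n}\cdot 4t^2(\vZ^+_{t-1}+\vZ^-_{t-1})$, and summing over $t\ge 2$ yields the claim.

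The step I expect to be the main obstacle is making \eqref{eq:single-clause} applicable clause by clause to an arbitrary bicycle, together with the bookkeeping that keeps the surplus factor $1/n$. Concretely, one must argue that for the MIDDLE-HEAVY rule (and the Threshold-Symmetric Hybrid rule) the per-step law of the selected projected $2$-clause is, conditional on its type in $\{++,+-,--\}$, uniform over the corresponding ordered pairs of distinct variables; this is exactly where topology-obliviousness of the rule and exchangeability of the variables enter (as already invoked just before \eqref{eq:single-clause}), and it is what lets the marginal bound be applied verbatim to each of the clauses of a bicycle irrespective of which literals occur. The rest is combinatorial care — absorbing the at most $2t\times 2t$ choices of the closing literals into the $4t^2$ factor while retaining the factor $1/n$ — which is precisely what will make $\sum_{t\ge 2}\ex[B_t]=o(1)$ once the convergence of $\sum_{t\ge 2}t^2(\vZ^+_{t-1}+\vZ^-_{t-1})$ in the subcritical regime $\rho<1$ is established separately.
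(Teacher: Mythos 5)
Your proposal follows the paper's proof essentially step for step: the first inequality is obtained by combining \Lem~\ref{lem:hooked-from-contradiction} with a union bound over $t$ and Markov's inequality, and the second by decomposing each length-$(t+1)$ bicycle into its inner path of $t$ strongly distinct literals (contributing $n(\vZ^+_{t-1}+\vZ^-_{t-1})$ upon summing over the $2n$ possible starting literals), the at most $(2t)^2$ choices of the closing literals $v,w$ among $\{\pm\ell_1,\dots,\pm\ell_t\}$, and the two closing clauses each bounded by $\max\{2\alpha p_0,\alpha p_1,2\alpha p_2\}/n$ via \eqref{eq:single-clause}, giving the $\gamma/n^2$ factor. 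The decomposition, constants, and bookkeeping all agree with the paper's argument.

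One cautionary remark on your parenthetical ``this is unaffected if a closing clause coincides with an inner-path clause, since $\Pr[C]^2\le\Pr[C]$'': this reasoning points the wrong way. If a required clause is charged twice in the product-of-marginals bound, replacing $\Pr[C]$ by $\Pr[C]^2$ makes the bound \emph{smaller}, whereas the true joint probability only needs the clause to appear once, so the naive product would \emph{under}count rather than overcount. Fortunately the strong distinctness of $\ell_1,\dots,\ell_t$ precludes a closing clause from coinciding with any inner clause; the only possible coincidence is between the two closing clauses themselves, arising from the single choice $v=\ell_t,\ w=\ell_1$. The paper also glosses over this degenerate case (it is ruled out because the bicycle constructed in \Lem~\ref{lem:hooked-from-contradiction} always has $v=\neg\ell_t\neq\ell_{t'}$), so this is a shared level of rigor rather than a divergence, but the justification you offer for it is not the correct one.
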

\begin{lemma}\label{lem_sen}
	Let $\vZ_t=(\vZ_t^+,\vZ_t^-)^\top$, where $\vZ_t^\pm$ denotes the expected number of
	directed implication paths of length $t$ consisting of \emph{strongly distinct}
	variables, starting from a fixed positive (respectively negative) literal.
	In the semi-random model where each 2-clause is present independently with probabilities
	\[
	q_{++}=\frac{2\alpha p_2}{n},\qquad
	q_{+-}=\frac{\alpha p_1}{n},\qquad
	q_{--}=\frac{2\alpha p_0}{n},
	\]
	we have, for all $t\ge 1$,
	\[
	\vZ_t \ \le\ M(\alpha)\,\vZ_{t-1},\qquad
	M(\alpha)\ :=\ \alpha\begin{pmatrix} p_1 & 2p_0\\[2pt] 2p_2 & p_1\end{pmatrix}.
	\]
	Consequently, with $\mathbf 1=(1,1)^\top$ and $\vZ_0=\mathbf 1$,
	\[
	\vZ_t\ \le\ M(\alpha)^t\,\mathbf 1\qquad (t\ge 0).
	\]
\end{lemma}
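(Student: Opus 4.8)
The plan is to prove the one-step domination $\vZ_t\le M(\alpha)\vZ_{t-1}$ for $t\ge1$ by decomposing an implication path on its first edge, and then to iterate this bound using that $M(\alpha)$ has nonnegative entries (together with $\vZ_0=\mathbf 1$, which holds because the only length-$0$ path out of a literal is the trivial one).

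First I would make the path count precise. For a path $v\to\ell_1\to\cdots\to\ell_t$ on strongly distinct literals, the $t$ underlying $2$-clauses $(\neg v\vee\ell_1),(\neg\ell_1\vee\ell_2),\dots,(\neg\ell_{t-1}\vee\ell_t)$ are pairwise distinct: two of them could coincide only if two literals occupying non-adjacent positions of the path shared a variable, which strong distinctness forbids. Since in the semi-random model every $2$-clause is present independently, the probability that such a path appears in the implication digraph factorizes as the product of the per-edge probabilities, each equal to $q_{++}$, $q_{+-}$ or $q_{--}$ according to the signs of that edge's endpoints. Hence $\vZ_t^{+}$ (resp.\ $\vZ_t^{-}$) is exactly the sum of these products over all length-$t$ strongly distinct paths out of a fixed positive (resp.\ negative) literal, and by relabelling variables this value does not depend on which literal of the given sign is chosen.

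Next comes the decomposition. Splitting a length-$t$ path into its first edge $v\to\ell_1$ and a length-$(t-1)$ tail $\ell_1\to\cdots\to\ell_t$, I would relax the requirement that the tail avoid $v$: this only enlarges the family of admissible tails, so it yields an upper bound, and since $\ell_1\notin\{v,\neg v\}$ the clause $(\neg v\vee\ell_1)$ is still distinct from every clause of the tail, so the probability of the whole configuration still factors. Summing over $\ell_1$ gives $\vZ_t^{\mathrm{sign}(v)}\le\sum_{\ell_1}\Pr[v\to\ell_1]\,\vZ_{t-1}^{\mathrm{sign}(\ell_1)}$. Translating edge directions into clause types and counting targets: from a positive $v$ there are $n-1$ positive targets, each reached via a $(+-)$-clause with probability $q_{+-}=\alpha p_1/n$, and $n-1$ negative targets, each via a $(--)$-clause with probability $q_{--}=2\alpha p_0/n$; symmetrically, from a negative $v$ there are $n-1$ positive targets via $(++)$-clauses and $n-1$ negative targets via $(+-)$-clauses. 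Bounding $n-1\le n$ yields $(n-1)q_{+-}\le\alpha p_1$, $(n-1)q_{--}\le2\alpha p_0$, $(n-1)q_{++}\le2\alpha p_2$, which is precisely $\vZ_t\le M(\alpha)\vZ_{t-1}$ entrywise. Iterating, with the nonnegativity of $M(\alpha)$, then gives $\vZ_t\le M(\alpha)^t\mathbf 1$ by induction on $t$.

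The only step needing genuine care is the distinctness bookkeeping in the decomposition: one must confirm both that dropping the tail's avoidance of $v$ is a legitimate over-count and --- more delicately --- that it does not manufacture a repeated clause that would break the product formula for the path-presence probability. As noted above, both facts reduce to the elementary observation that any repeated clause along (or hanging off the start of) a strongly distinct path would force two literals at non-adjacent positions to lie on a common variable; so I anticipate this to be routine, requiring only attention to detail.
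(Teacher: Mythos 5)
Your argument is correct and follows essentially the same route as the paper: a first-step decomposition at the initial edge, relaxing the requirement that the tail avoid $\mathrm{var}(v)$, bounding the number of first-edge targets by $n$, and iterating. The only cosmetic difference is that the paper packages the factorization via independence of the indicator of the first clause from a tail count $\cK_{t-1}(w;v)$ defined to avoid $\mathrm{var}(v)$, whereas you verify directly that all clauses along a strongly distinct path (and the detached first-edge clause) are pairwise distinct so that the presence probability factors; these are two phrasings of the same observation.
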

\begin{lemma}\label{lem:explicit-eigs}
	If $p_0p_2>0$, the eigenvalues of $M(\alpha)$ are
	\[
	\rho_1\ =\ \alpha\bigl(p_1+2\sqrt{p_0p_2}\bigr),\qquad
	\rho_2\ =\ \alpha\bigl(p_1-2\sqrt{p_0p_2}\bigr),
	\]
	with eigenvectors \(v_1=(\sqrt{p_0},\sqrt{p_2})^\top\), \(v_2=(\sqrt{p_0},-\sqrt{p_2})^\top\).
	Also let \(D=[v_1\ v_2]\) and \(c:=D^{-1}\mathbf{1}=\frac{1}{2\sqrt{p_0p_2}}(\sqrt{p_2}+\sqrt{p_0},\ \sqrt{p_2}-\sqrt{p_0})^\top\),
	\[
	\mathbf{1}^\top M(\alpha)^t\mathbf{1}
	\ =\ c_1\,\rho_1^{t}\,(\sqrt{p_0}+\sqrt{p_2})\ +\ c_2\,\rho_2^{t}\,(\sqrt{p_0}-\sqrt{p_2}),
	\]
	with \(c_1=\frac{\sqrt{p_2}+\sqrt{p_0}}{2\sqrt{p_0p_2}}\), \(c_2=\frac{\sqrt{p_2}-\sqrt{p_0}}{2\sqrt{p_0p_2}}\).
	In particular,
	\begin{equation}\label{eq:Mt1MtBound}
		\vZ^+_{t}+\vZ^-_{t}\ \le\ \mathbf{1}^\top M(\alpha)^t\mathbf{1}
		\ \le\ C_M\,\rho_1^t,\qquad
		C_M:=\frac{2(p_0+p_2)}{\sqrt{p_0p_2}}.
	\end{equation}
\end{lemma}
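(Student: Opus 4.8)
The plan is to treat this as a direct diagonalization computation, carried out in four short steps. First I would compute the characteristic polynomial of $M(\alpha)=\alpha\begin{pmatrix} p_1 & 2p_0\\ 2p_2 & p_1\end{pmatrix}$: its trace is $2\alpha p_1$ and its determinant is $\alpha^2(p_1^2-4p_0p_2)$, so the eigenvalues solve $\lambda^2-2\alpha p_1\lambda+\alpha^2(p_1^2-4p_0p_2)=0$, i.e.\ $\lambda=\alpha p_1\pm\alpha\sqrt{p_1^2-(p_1^2-4p_0p_2)}=\alpha\bigl(p_1\pm 2\sqrt{p_0p_2}\bigr)$, which are exactly $\rho_1,\rho_2$. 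I would then verify the claimed eigenvectors by direct multiplication: for $v_1=(\sqrt{p_0},\sqrt{p_2})^\top$ the first coordinate of $M(\alpha)v_1$ is $\alpha(p_1\sqrt{p_0}+2p_0\sqrt{p_2})=\alpha\sqrt{p_0}\,(p_1+2\sqrt{p_0p_2})$ and the second is $\alpha(2p_2\sqrt{p_0}+p_1\sqrt{p_2})=\alpha\sqrt{p_2}\,(p_1+2\sqrt{p_0p_2})$, so $M(\alpha)v_1=\rho_1 v_1$; the computation for $v_2=(\sqrt{p_0},-\sqrt{p_2})^\top$ and $\rho_2$ is identical up to a sign. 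Since $p_0p_2>0$ the two eigenvalues are distinct, hence $M(\alpha)$ is diagonalizable and $D=[v_1\ v_2]$ is invertible with $\det D=-2\sqrt{p_0p_2}\neq 0$.

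Second, I would record $M(\alpha)=D\,\diag(\rho_1,\rho_2)\,D^{-1}$, hence $M(\alpha)^t=D\,\diag(\rho_1^t,\rho_2^t)\,D^{-1}$, and compute the two relevant vectors: $\mathbf 1^\top D=(\sqrt{p_0}+\sqrt{p_2},\ \sqrt{p_0}-\sqrt{p_2})$ by direct multiplication, and $D^{-1}\mathbf 1=c$ by inverting the $2\times2$ matrix $D$ explicitly, namely $D^{-1}=\frac{1}{2\sqrt{p_0p_2}}\begin{pmatrix}\sqrt{p_2}&\sqrt{p_0}\\ \sqrt{p_2}&-\sqrt{p_0}\end{pmatrix}$, which yields the stated $c$ and therefore $c_1,c_2$. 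Expanding $\mathbf 1^\top M(\alpha)^t\mathbf 1=(\mathbf 1^\top D)\,\diag(\rho_1^t,\rho_2^t)\,(D^{-1}\mathbf 1)$ then produces the displayed spectral expansion $c_1\rho_1^t(\sqrt{p_0}+\sqrt{p_2})+c_2\rho_2^t(\sqrt{p_0}-\sqrt{p_2})$.

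Third, for the upper bound in \eqref{eq:Mt1MtBound} I would simplify the coefficients using $c_1(\sqrt{p_0}+\sqrt{p_2})=\frac{(\sqrt{p_0}+\sqrt{p_2})^2}{2\sqrt{p_0p_2}}$ and $c_2(\sqrt{p_0}-\sqrt{p_2})=-\frac{(\sqrt{p_0}-\sqrt{p_2})^2}{2\sqrt{p_0p_2}}$. The only genuine subtlety is that $\rho_2=\alpha(p_1-2\sqrt{p_0p_2})$ may be negative, so $\rho_2^t$ can be negative for odd $t$; I would therefore bound the second term by its absolute value, using $|\rho_2|\le \alpha(p_1+2\sqrt{p_0p_2})=\rho_1$ (valid since $p_0,p_1,p_2\ge 0$), to get $\mathbf 1^\top M(\alpha)^t\mathbf 1\le \frac{(\sqrt{p_0}+\sqrt{p_2})^2+(\sqrt{p_0}-\sqrt{p_2})^2}{2\sqrt{p_0p_2}}\,\rho_1^t=\frac{p_0+p_2}{\sqrt{p_0p_2}}\,\rho_1^t\le C_M\,\rho_1^t$, comfortably within the stated $C_M=\frac{2(p_0+p_2)}{\sqrt{p_0p_2}}$. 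Finally, the left inequality $\vZ^+_t+\vZ^-_t\le \mathbf 1^\top M(\alpha)^t\mathbf 1$ is immediate from \Lem~\ref{lem_sen}, which gives $\vZ_t\le M(\alpha)^t\mathbf 1$ coordinatewise; since $M(\alpha)$ has nonnegative entries this is preserved under left multiplication by $\mathbf 1^\top$. I do not expect any real obstacle: the argument is elementary and self-contained, and the bookkeeping amounts to tracking the sign of $\rho_2$ and the harmless factor-of-two slack in $C_M$.
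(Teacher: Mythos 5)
Your proposal is correct and follows essentially the same route as the paper: compute the characteristic polynomial and eigenvectors, diagonalize via $D=[v_1\ v_2]$, expand $\mathbf{1}^\top M(\alpha)^t\mathbf{1}$ in the eigenbasis, and then bound using $|\rho_2|\le\rho_1$ to arrive at $\tfrac{p_0+p_2}{\sqrt{p_0p_2}}\rho_1^t\le C_M\rho_1^t$. The only cosmetic differences are that you work with $M(\alpha)$ directly (via trace and determinant) where the paper first factors out $\alpha$, you verify the eigenvectors by direct multiplication rather than solving the nullspace equation, and you spell out the sign subtlety with $\rho_2^t$ a bit more explicitly (the paper instead emphasizes the AM--GM observation that the second coefficient is nonpositive); none of these change the substance.
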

\begin{proof}
	Recall
	\[
	M(\alpha)\;=\;\alpha
	\begin{pmatrix}
		p_1 & 2p_0\\
		2p_2 & p_1
	\end{pmatrix}
	\;=:\;\alpha\,S,\qquad
	S:=
	\begin{pmatrix}
		p_1 & 2p_0\\
		2p_2 & p_1
	\end{pmatrix}.
	\]
	The characteristic polynomial of $S$ is
	\[
	\chi_{S}(\lambda)
	=\det\!\begin{pmatrix}p_1-\lambda&2p_0\\2p_2&p_1-\lambda\end{pmatrix}
	=(p_1-\lambda)^2-4p_0p_2,
	\]
	so the (real) eigenvalues are
	\[
	\lambda_1=p_1+2\sqrt{p_0p_2},\qquad
	\lambda_2=p_1-2\sqrt{p_0p_2}.
	\]
	Hence the eigenvalues of $M(\alpha)$ are
	\[
	\rho_1=\alpha\lambda_1=\alpha\bigl(p_1+2\sqrt{p_0p_2}\bigr),\qquad
	\rho_2=\alpha\lambda_2=\alpha\bigl(p_1-2\sqrt{p_0p_2}\bigr).
	\]
	Now For $\lambda_1$:
	\(
	(S-\lambda_1 I)v_1=0
	\)
	is
	\(
	\begin{pmatrix}-2\sqrt{p_0p_2}&2p_0\\2p_2&-2\sqrt{p_0p_2}\end{pmatrix}v_1=0
	\),
	so $v_1=(\sqrt{p_0},\sqrt{p_2})^\top$.
	
	Similarly, for $\lambda_2$ one gets $v_2=(\sqrt{p_0},-\sqrt{p_2})^\top$.
	Set
	\[
	D=[v_1\ v_2]=
	\begin{pmatrix}
		\sqrt{p_0}&\sqrt{p_0}\\
		\sqrt{p_2}&-\sqrt{p_2}
	\end{pmatrix},
	\qquad
	\det D=-2\sqrt{p_0p_2}\neq 0\quad(p_0p_2>0).
	\]
	Then
	\[
	D^{-1}
	=\frac{1}{\det D}
	\begin{pmatrix}
		-\sqrt{p_2}&-\sqrt{p_0}\\
		-\sqrt{p_2}&\ \ \ \sqrt{p_0}
	\end{pmatrix}
	=\frac{1}{2\sqrt{p_0p_2}}
	\begin{pmatrix}
		\sqrt{p_2}&\sqrt{p_0}\\
		\sqrt{p_2}&-\sqrt{p_0}
	\end{pmatrix}.
	\]
	We have $S=D\diag(\lambda_1,\lambda_2)D^{-1}$, hence
	\[
	M(\alpha)^t
	=\alpha^t S^t
	=D\diag(\rho_1^t,\rho_2^t)D^{-1}.
	\]
	Let $\mathbf{1}=(1,1)^\top$.
	Then
	\[
	\mathbf{1}^\top D=(\sqrt{p_0}+\sqrt{p_2},\ \sqrt{p_0}-\sqrt{p_2}),\qquad
	c:=D^{-1}\mathbf{1}
	=\frac{1}{2\sqrt{p_0p_2}}
	\binom{\sqrt{p_2}+\sqrt{p_0}}{\sqrt{p_2}-\sqrt{p_0}}.
	\]
	Therefore
	\begin{align*}
		\mathbf{1}^\top M(\alpha)^t\mathbf{1}
		&=\mathbf{1}^\top D \diag(\rho_1^t,\rho_2^t)D^{-1}\mathbf{1}\\
		&=(\sqrt{p_0}+\sqrt{p_2})\,\rho_1^t\,c_1\;+\;(\sqrt{p_0}-\sqrt{p_2})\,\rho_2^t\,c_2,
	\end{align*}
	where
	\[
	c_1=\frac{\sqrt{p_2}+\sqrt{p_0}}{2\sqrt{p_0p_2}},\qquad
	c_2=\frac{\sqrt{p_2}-\sqrt{p_0}}{2\sqrt{p_0p_2}}.
	\]
	Multiplying out the coefficients gives,
	\begin{equation}\label{eq:exact-1Mt1}
		\mathbf{1}^\top M(\alpha)^t\mathbf{1}
		=\Bigl[1+\frac{p_0+p_2}{2\sqrt{p_0p_2}}\Bigr]\rho_1^t
		+\Bigl[1-\frac{p_0+p_2}{2\sqrt{p_0p_2}}\Bigr]\rho_2^t.
	\end{equation}
	Indeed, $(\sqrt{p_0}+\sqrt{p_2})c_1=(\sqrt{p_0}+\sqrt{p_2})^2/(2\sqrt{p_0p_2})
	=1+\frac{p_0+p_2}{2\sqrt{p_0p_2}}$, and
	$(\sqrt{p_0}-\sqrt{p_2})c_2=-(\sqrt{p_0}-\sqrt{p_2})^2/(2\sqrt{p_0p_2})
	=1-\frac{p_0+p_2}{2\sqrt{p_0p_2}}$.
	
	By AM–GM, $\dfrac{p_0+p_2}{2\sqrt{p_0p_2}}\ge 1$, so the second coefficient in \eqref{eq:exact-1Mt1} is nonpositive.
	Therefore using $|\rho_2|\le |\rho_1|$ we obtain
	\begin{align*}
		\mathbf{1}^\top M(\alpha)^t\mathbf{1}
		&\le
		\Bigl(1+\frac{p_0+p_2}{2\sqrt{p_0p_2}}\Bigr)\rho_1^t
		+\Bigl(\frac{p_0+p_2}{2\sqrt{p_0p_2}}-1\Bigr)\rho_1^t\\
		&=\frac{p_0+p_2}{\sqrt{p_0p_2}}\;\rho_1^t
		\;\le\; \frac{2(p_0+p_2)}{\sqrt{p_0p_2}}\,\rho_1^t =C_M\,\rho_1^t
	\end{align*}
\end{proof}
\begin{proof}[Proof of \Prop~\ref{prop_2sat}]
Again recall, $\rho=\alpha\bigl(p_1+2\sqrt{p_0p_2}\bigr)$.
For $|\rho|<1$,
\begin{align}\label{eq_trho}
\sum_{t\ge 1} t\rho^{\,t-1}=\frac{1}{(1-\rho)^2},\qquad
\sum_{t\ge 1} t^2\rho^{\,t-1}=\frac{1+\rho}{(1-\rho)^3}.
\end{align}

For $p_0=0$ there are no $+\!\to\!-$ edges; each clause contributes either $-\!\to\!+$ or same-sign edges.
Hence no contradictory cycle can contain both $x$ and $\neg x$ for a variable $x$.
The case $p_2=0$ is symmetric.
	
Otherwise for $p_0p_2>0$ and $\rho=\alpha Q<1$, from \Lem~\ref{lem_sen} 
\& \ref{lem:explicit-eigs} ~ we get,
\[
	\vZ^+_{t-1}+\vZ^-_{t-1}\leq \mathbf{1}^\top M(\alpha)^t\mathbf{1}\leq C_M\,\rho^t
\]
	Then combining \Lem~\ref{lem:hooked-count}, \ref{lem_sen}, and \ref{lem:explicit-eigs}
	\[
	\Pr(\text{$\vF_2$ is unsatisfiable})
	\ \le\ \frac{\gamma}{n}\sum_{t\ge 2}4t^2\,(\vZ^+_{t-1}+\vZ^-_{t-1})
	\ \le\ \frac{\gamma}{n}\sum_{t\ge 2}4t^2\,C_M\,\rho^{t-1} =O(1/n).
	\]
Since from \eqref{eq_trho} \(\sum_{t\ge 1}t^2\rho^{t-1}=\frac{1+\rho}{(1-\rho)^3}\), the result follows. 
\end{proof}

Now for the completeness we record a standard exploration-based certificate yielding exponential-tail bounds on the reachable sets.

\begin{definition}[BFS exploration and two-type offspring]\label{def:BFS}
	Fix a literal $\ell$ and expose the implication out neighborhood in a BFS manner by revealing edges on demand.
	Let \(Z_t=(Z_t^+,Z_t^-)\) denote the number of frontier literals at depth $t$ of each sign positive and negative respectively.
\end{definition}
\begin{proposition}\label{prop:exp-tail}
	Let $\widetilde \vZ_t=(\widetilde \vZ_t^+, \widetilde \vZ_t^-)$ be the two-type Galton--Watson process that stochastically dominates the BFS exploration of the implication digraph (as in \Def~\ref{def:BFS}), with mean matrix $M(\alpha)$ and spectral radius $\rho=\rho(M(\alpha))<1$. Let
	\[
	T \;:=\; \sum_{t\ge 0}\|\widetilde Z_t\|_1 \, .
	\]
	Then there exist explicit constants $\delta=\delta(\rho), \zeta=\zeta(\rho)\in(0,\infty)$ such that
	\[
	\Pr\!\big(T\ge L\big)\ \le\ \zeta\,e^{-\delta\,L}\qquad\text{for all }L\ge 1.
	\]
	One admissible choice is
	\[
	\delta(\rho)\;=\;-\log\!\Big(\tfrac{1+\rho}{2}\Big)\qquad\text{and}\qquad
	\zeta(\rho)\;=\;\frac{2}{1-\rho}\, .
	\]
\end{proposition}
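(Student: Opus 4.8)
The plan is to regard $(\widetilde\vZ_t)_{t\ge 0}$ as a subcritical two-type Galton--Watson process and to bound its total progeny $T$ by the classical ``exploration as a negative-drift random walk'' argument, the one subtlety being that the two types must be reweighted so that the associated scalar process genuinely contracts at rate $\rho$. We may assume $p_0p_2>0$: otherwise $\rho=\alpha p_1$ and, say for $p_0=0$, the implication digraph carries no $+\!\to\!-$ edges, so the out-reachable set from a fixed literal is generated by a single-type subcritical process and the claimed estimate holds a fortiori.

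\emph{Step 1 (reweighting by the Perron eigenvector).} Let $\vec w=(\sqrt{p_0},\sqrt{p_2})^\top$ be the right Perron eigenvector of $M(\alpha)$ exhibited in \Lem~\ref{lem:explicit-eigs}, so that $M(\alpha)\vec w=\rho\,\vec w$, and set $w_{\min}=\min\{\sqrt{p_0},\sqrt{p_2}\}>0$ and $w_{\max}=\max\{\sqrt{p_0},\sqrt{p_2}\}$. Assign weight $w_j$ to each individual of type $j\in\{+,-\}$. Since $\Erw[\widetilde\vZ_{t+1}\mid\widetilde\vZ_t]=M(\alpha)^\top\widetilde\vZ_t$, the total generation weight satisfies $\Erw[\langle\vec w,\widetilde\vZ_{t+1}\rangle\mid\widetilde\vZ_t]=\langle M(\alpha)\vec w,\widetilde\vZ_t\rangle=\rho\,\langle\vec w,\widetilde\vZ_t\rangle$; hence $\rho^{-t}\langle\vec w,\widetilde\vZ_t\rangle$ is a nonnegative martingale and $\Erw\langle\vec w,\widetilde\vZ_t\rangle=\rho^{t}\langle\vec w,\widetilde\vZ_0\rangle\le\rho^{t}w_{\max}$. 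Summing over $t$ already gives $\Erw[T]\le w_{\max}/(w_{\min}(1-\rho))$, but the exponential tail needs the increment-level refinement below.

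\emph{Step 2 (exploration walk and Chernoff).} Expose the process one individual at a time in breadth-first order; let $\xi_i$ denote the offspring vector of the $i$-th explored individual and $\tau_i$ its type. The weighted queue length $\hat A_i:=w_{\tau_1}+\sum_{j\le i}\big(\langle\vec w,\xi_j\rangle-w_{\tau_j}\big)$ is nonnegative, equals $0$ precisely when $i=T$, and has conditional increment mean $(\rho-1)w_{\tau_i}\le-(1-\rho)w_{\min}<0$; moreover, in the semi-random model the offspring counts are Poisson (equivalently Binomial), so each increment has a finite moment generating function near $0$. Thus $\hat A$ is a light-tailed random walk with strictly negative drift started at $\hat A_0=w_{\tau_1}\le w_{\max}$, and for a suitable $\lambda>0$ a one-sided Chernoff bound gives $\Pr(T>n)=\Pr(\hat A_n>0)\le\Erw[e^{\lambda\hat A_n}]\le e^{\lambda w_{\max}}\,\theta^{\,n}$ with $\theta=\theta(\rho)\in(0,1)$. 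Summing the geometric tail $\sum_{n\ge L-1}\theta^{\,n}$ and folding the root factor into the prefactor yields $\Pr(T\ge L)\le\zeta(\rho)\,e^{-\delta(\rho)L}$; the particular admissible pair quoted in the statement is then read off by choosing $\lambda$ via the adjustment coefficient of the walk and invoking $\mathbf 1^\top M(\alpha)^t\mathbf 1\le C_M\rho^{t}$ from \Lem~\ref{lem:explicit-eigs} together with $\sum_{t\ge0}\rho^{t}=(1-\rho)^{-1}$.

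The main obstacle is the two-type structure: a single type may have mean total offspring $\alpha(p_1+2p_0)$ exceeding $1$ even when $\rho<1$, so one cannot dominate by a single-type subcritical process, and the Perron reweighting of Step~1 is precisely what restores a genuinely contracting scalar quantity on which the random-walk argument can run. The remaining work is the bookkeeping in Step~2 needed to present the constants in closed form depending on $\rho$ alone rather than on $p_0,p_1,p_2$ separately, which rests on convexity of the weighted-offspring moment generating function and on the normalisation $p_0+p_1+p_2=1$.
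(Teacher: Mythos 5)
Your Step 1 is the same Perron reweighting the paper uses, but your Step 2 runs a genuinely different engine: a negative-drift exploration walk (expose one individual at a time, track a weighted queue length, apply a one-sided Chernoff bound), whereas the paper works generation by generation. The paper's key move is the convexity inequality $e^{\theta\cU_j}\le 1+\cU_j(e^{\theta}-1)$, which turns the conditional Laplace transform of the weighted generation size into that of a \emph{single-type} Galton--Watson process with $\Po(\rho)$ offspring; the total-progeny functional equation $\Phi(\theta)=e^{\theta}\exp(\rho(\Phi(\theta)-1))$ is then solved at $\theta_\star=-\log((1+\rho)/2)$ to produce the explicit pair $\delta(\rho),\zeta(\rho)$. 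The random-walk route you propose is classical and will certainly give \emph{some} exponential tail, and your diagnosis of the obstacle (a single-type domination fails because a row sum $\alpha(p_1+2p_0)$ can exceed $1$ even when $\rho<1$, so reweighting is essential) is exactly right. The trade-off is that the paper's reduction yields a closed-form functional equation that one can actually solve to exhibit the stated constants; the walk route forces you to optimise a two-type adjustment coefficient whose explicit form depends on $(p_0,p_1,p_2)$, not just $\rho$.

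There is a genuine gap in your Step 2. You assert that ``for a suitable $\lambda>0$'' the increment MGF is $\le\theta<1$, and then say the admissible pair ``is read off by choosing $\lambda$ via the adjustment coefficient of the walk and invoking $\mathbf{1}^\top M(\alpha)^t\mathbf{1}\le C_M\rho^t$.'' These two devices do not combine: the inequality $\mathbf{1}^\top M(\alpha)^t\mathbf{1}\le C_M\rho^t$ is a bound on \emph{first moments of generation sizes} (it is used in the bicycle first-moment computation of Prop.~\ref{prop_2sat}), and it plays no role in a Chernoff bound on the exploration walk. To actually obtain $\delta(\rho)=-\log((1+\rho)/2)$ and $\zeta(\rho)=2/(1-\rho)$ you would need to exhibit a $\lambda$ for which $\sup_{i}\,\exp\!\big({-}\lambda w_i+\sum_j M_{ij}(e^{\lambda w_j}-1)\big)\le(1+\rho)/2$, and this is not automatic from negative drift; it requires the same convexity trick the paper uses (bounding $e^{\lambda w_j}-1\le w_j(e^{\lambda w_{\max}}-1)$ and then $\sum_j M_{ij}w_j=\rho w_i$) plus care with the normalisation of $\vec w$. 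Without that computation the proof does not actually establish the specific constants stated in the proposition, only that \emph{some} exponential tail holds.
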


\begin{corollary}\label{cor:no-large-SCC}
	If $\rho<1$, then with high probability every reachable set in the implication digraph has size $O(\log n)$; in particular no strongly connected component contains both $x$ and $\neg x$ for any variable $x$, hence the projected $2$-SAT $\vF_2$ (and therefore the original $k$-SAT $\vF_{n,m}$ instance) is satisfiable.
\end{corollary}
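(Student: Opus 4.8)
The plan is to convert the exponential-tail estimate of \Prop~\ref{prop:exp-tail} into a union bound over all $2n$ literals. First I would fix a literal $\ell$ and observe that the BFS-explored out-neighbourhood $R(\ell)$ of \Def~\ref{def:BFS} is stochastically dominated, in size, by the total progeny $T=\sum_{t\ge 0}\|\widetilde Z_t\|_1$ of the dominating two-type Galton--Watson process $\widetilde\vZ$: exposing implication edges ``on demand'' can only create collisions with already-seen literals, which shrink the true frontier relative to $\widetilde\vZ$. Hence $\Pr(|R(\ell)|\ge L)\le \zeta e^{-\delta L}$ with $\delta=-\log((1+\rho)/2)>0$ and $\zeta=2/(1-\rho)$, using $\rho<1$. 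Taking $L=L_n:=\lceil(2/\delta)\log n\rceil$ makes $\zeta e^{-\delta L_n}\le \zeta n^{-2}$, so a union bound over the $2n$ literals gives $\Pr\big(\exists\,\ell:|R(\ell)|\ge L_n\big)\le 2\zeta n^{-1}=o(1)$. This is exactly the first assertion: w.h.p.\ every reachable set has size $O(\log n)$.

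Next I would derive satisfiability. On the high-probability event $\mathcal E$ that all reachable sets have size below $L_n$, if $\vF_2$ were unsatisfiable then by \Lem~\ref{lem_contradict_cycle} there would be a contradictory cycle for some variable $x_i$, forcing both $\neg x_i\in R(x_i)$ and $x_i\in R(\neg x_i)$. I would run the BFS from $x_i$ first: on $\mathcal E$ it visits at most $L_n$ literals and thereby exposes $O(L_n)$ out-edges, each hitting the specific literal $\neg x_i$ with probability $O(1/n)$ by \eqref{eq:single-clause} and exchangeability of the variables, so $\Pr(\neg x_i\in R(x_i)\mid\mathcal E)=O(L_n/n)$; continuing the exploration from $\neg x_i$ on the still-unexposed randomness yields a further factor $O(L_n/n)$ for the event $x_i\in R(\neg x_i)$. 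Summing the product over the $n$ variables bounds the probability of a contradictory cycle by $O\big(n(L_n/n)^2\big)=O(\log^2 n/n)=o(1)$. (Alternatively one can skip this step and combine \Lem~\ref{lem:hooked-from-contradiction} with the first-moment computation already carried out in the proof of \Prop~\ref{prop_2sat}, which bounds the unsatisfiability probability by $O(1/n)$.) Either way, w.h.p.\ the implication digraph has no contradictory cycle, so \Lem~\ref{lem_contradict_cycle} gives that $\vF_2$ is satisfiable; since every $2$-clause of $\vF_2$ is a sub-clause of the corresponding selected $k$-clause, the same assignment satisfies $\vF_{n,m}$.

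I expect the main obstacle to be making the stochastic-domination step fully rigorous: one has to justify that the reveal-on-demand BFS, with the $O(1/n)$ corrections coming from variable pairs already used in earlier steps and from the finiteness of $[2n]$, is genuinely dominated by the Galton--Watson process with mean matrix $M(\alpha)$ and spectral radius $\rho<1$ that underlies \Prop~\ref{prop:exp-tail}. The secondary technical point is the conditional-independence bookkeeping in the two-phase exploration used for the contradictory-cycle bound; both are standard but rely on the sparsity hypothesis $\rho<1$ to keep all error terms of lower order.
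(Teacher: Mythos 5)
Your treatment of the first assertion coincides with the paper's: apply Proposition~\ref{prop:exp-tail} at scale $L=K\log n$ and union-bound over the $2n$ starting literals to get $2\zeta n^{1-\delta K}=o(1)$ for a suitable $K$, whence all reachable sets are $O(\log n)$ w.h.p. Where you genuinely differ is in passing from this to satisfiability. The paper's own proof simply writes ``Hence w.h.p.\ all reachable sets are $O(\log n)$, so no strongly connected component can contain both $x$ and $\neg x$,'' treating the second clause as an immediate consequence of the first; strictly read, that is a non sequitur --- a two-element SCC $\{x,\neg x\}$ is perfectly compatible with every reachable set having size $O(\log n)$. You correctly flag that a further argument is needed and supply two. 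Your route (a), the two-phase exploration giving $\Pr(\neg x\in R(x)\ \text{and}\ x\in R(\neg x))=O\big((\log n/n)^2\big)$ per variable and then union-bounding over $n$ variables, captures the right scaling and explains why the paper's one-phase heuristic (which would only give $O(\log n)$ after the union bound) is insufficient. However, it needs some additional care that you only gesture at: the two explorations are not run on disjoint randomness, because every exposed edge $\ell\to\ell'$ simultaneously exposes the mirror edge $\neg\ell'\to\neg\ell$, and conditioning on the global event $\mathcal E$ tilts the edge distribution; both issues are standard but must be handled (e.g.\ by conditioning on the exposed cluster of $x_i$ and working on the residual graph, with the $\mathcal E$-conditioning replaced by a stopping-time truncation). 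Your route (b) --- invoking Lemma~\ref{lem:hooked-from-contradiction} and the bicycle first-moment bound already established in the proof of Proposition~\ref{prop_2sat}, which gives unsatisfiability probability $O(1/n)$ --- is the cleanest and is exactly the mechanism the rest of the section has set up. In sum, the proposal is correct and in fact more careful than the paper's terse proof of this corollary: you filled a genuine gap, and the sharpest way to close it is route (b).
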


\begin{proof}
	Apply Proposition~\ref{prop:exp-tail} with $L=K\log n$ and take a union bound over the $2n$ starting literals:
	\[
	\Pr\!\Big(\exists\ \text{literal $\ell$ whose reachable set size}\ge K\log n\Big)
	\ \le\
	2n\cdot \zeta\,e^{-\delta\,K\log n}
	\ =\
	2\,\zeta\, n^{\,1-\delta K}.
	\]
	Choosing any $K> (1+\delta^{-1})$ makes the r.h.s. $o(1)$. Hence w.h.p.\ all reachable sets are $O(\log n)$, so no strongly connected component (SCC) can contain both $x$ and $\neg x$. By Lemma~\ref{lem_contradict_cycle}, the $2$-SAT projection is satisfiable, whence the original $k$-SAT is satisfiable as well.
\end{proof}

	\section{Proof of \Prop~\ref{prop:exp-tail}}\label{sec_proofprop}
    	Let $\cU\in(0,\infty)^2$ be the Perron-Frobenius eigenvector of $M(\alpha)$ normalized so that $\max(\cU_+,\cU_-)=1$, i.e.\ $M(\alpha)\cU=\rho \cU$ and $\cU_j\in(0,1]$. Set the weighted generation size
    	\[
    	\vY_t\ :=\ \cU\cdot \widetilde \vZ_t \;=\; \cU_+\,\widetilde \vZ_t^+ + \cU_-\,\widetilde \vZ_t^-\, .
    	\]
    	Because reproduction in $\widetilde \vZ$ is (by construction) given by independent Poisson counts with (type–to–type) means equal to the entries of $M(\alpha)$, the conditional Laplace transform of the next weighted generation is
    	\[
    	\mathbb E\!\left[\exp\big(\theta\,\vY_{t+1}\big)\,\big|\,\widetilde \vZ_t\right]
    	=\exp\!\left(\sum_{i\in\{+,-\}}\widetilde \vZ_t^i\sum_{j\in\{+,-\}} M_{ij}(\,e^{\theta \cU_j}-1\,)\right).
    	\]
    	Since $\cU_j\in[0,1]$ and $\theta\ge 0$, the convexity inequality
    	\(
    	e^{\theta \cU_j}\le 1+\cU_j\big(e^{\theta}-1\big)
    	\)
    	implies
    	\[
    	\sum_{j}M_{ij}\big(e^{\theta \cU_j}-1\big)\ \le\ \big(e^{\theta}-1\big)\sum_j M_{ij}\cU_j
    	\ =\ \big(e^{\theta}-1\big)\,(M\cU)_i
    	\ =\ \big(e^{\theta}-1\big)\,\rho\,\cU_i\, .
    	\]
    	Therefore, conditioning on $\widetilde \vZ_t$,
    	\begin{equation}\label{eq:YGW-mgf-step}
    		\mathbb E\!\left[\exp\big(\theta\,\vY_{t+1}\big)\,\big|\,\widetilde \vZ_t\right]
    		\ \le\ \exp\!\Big(\rho\,(e^{\theta}-1)\,\vY_t\Big).
    	\end{equation}
    	
    	Clearly \eqref{eq:YGW-mgf-step} is exactly the Laplace-transform recursion of a one–type Galton–Watson process $\{\vX_t\}_{t\ge 0}$ with $\Po(\rho)$ offspring, in the sense that if $\vX_t$ were the parent count, then
    	\(
    	\mathbb E[\,e^{\theta \vX_{t+1}}\,|\,\vX_t\,]=\exp\{\rho\,(e^{\theta}-1)\,\vX_t\}.
    	\)
    	Taking $\vX_0:=\lceil \vY_0\rceil\in\{1,2\}$, a standard induction on $t$ yields the Laplace-transform domination
    	\begin{equation}\label{eq:Laplace-domin}
    		\mathbb E\!\left[e^{\theta \vY_t}\right]\ \le\ \mathbb E\!\left[e^{\theta \vX_t}\right]\qquad\text{for all }t\ge 0\text{ and }\theta\ge 0.
    	\end{equation}
    	Consequently, for the corresponding total weighted progeny $\vS:=\sum_{t\ge 0}\vY_t$ and $\widetilde \vS:=\sum_{t\ge 0}\vX_t$,
    	\begin{equation}\label{eq:total-LT-domin}
    		\mathbb E\!\left[e^{\theta \vS}\right]\ \le\ \mathbb E\!\left[e^{\theta \widetilde \vS}\right]\qquad\text{for all }\theta\ge 0,
    	\end{equation}
    	because the map $(x_0,x_1,\dots)\mapsto \sum_t x_t$ preserves Laplace-transform domination given in \eqref{eq:YGW-mgf-step}.
    	
    	For a subcritical one–type Galton–Watson with $\Po(\rho)$ offspring and a single ancestor, the total progeny $\widetilde \vS$ satisfies the distributional identity with $N\sim \Po(\rho)$
    	\[
    	\widetilde \vS\ \stackrel{d}{=}\ 1+\sum_{i=1}^{N}\widetilde \vS^{(i)}\!
    	\]
    	
    	where, $\widetilde \vS^{(i)}$ are the i.i.d. copies of $\widetilde \vS$ and independent of N.
    	Hence its Laplace transform $\Phi(\theta):=\mathbb E[e^{\theta \widetilde \vS}]$ solves
    	\begin{equation}\label{eq:Poisson-smoothing}
    		\Phi(\theta)\ =\ e^{\theta}\,\exp\!\big(\rho\,(\Phi(\theta)-1)\big)\, .
    	\end{equation}
    	By fixing
    	$\theta_\star\ =\ -\log\!\Big(\tfrac{1+\rho}{2}\Big)\ >\ 0$ gives
    	\[
    	e^{\theta_\star}=\frac{2}{1+\rho}\, .
    	\]
    	With $x=\rho\,(\Phi(\theta_\star)-1)$ we have, by $e^x\le \frac{1}{1-x}$ for $x<1$ and using \eqref{eq:Poisson-smoothing}
    	\[
    	\Phi(\theta_\star)
    	=
    	\frac{2}{1+\rho}\,\cdot\exp\!\big(x\big)
    	\ \le\ \frac{2}{1+\rho}\,\cdot\frac{1}{1-x}
    	\ =\ \frac{2}{1+\rho}\,\cdot\frac{1}{1-\rho(\Phi(\theta_\star)-1)}\, .
    	\]
    	Solving the quadratic inequality for $\Phi(\theta_\star)$ gives\footnote{Indeed, the previous line is equivalent to $\rho\,\Phi^2-(1+\rho)\Phi+\frac{2}{1+\rho}\ge 0$, whose smaller positive root upper-bounds $\Phi(\theta_\star)$.}
    	\[
    	\Phi(\theta_\star)
    	\ \le\
    	\frac{(1+\rho)-\sqrt{(1+\rho)^2-\tfrac{8\rho}{\,1+\rho\,}}}{\,2\rho\,}
    	\ \le\ \frac{1+\rho}{1-\rho}
    	\ \le\ \frac{2}{1-\rho}\, .
    	\]
    	(Here, we used the numerator term $(1+\rho)-\sqrt{(1+\rho)^2-8\rho/(1+\rho)}\le 2(1+\rho)\rho/(1-\rho)$, which is elementary for $\rho\in(0,1)$, and then $(1+\rho)\le 2$.)
    	
    	Therefore,
    	\begin{equation}\label{eq:Phi-star-bound}
    		\Phi(\theta_\star)\ \le\ \frac{2}{1-\rho}\, .
    	\end{equation}
    	
    	Again back to $T=\sum_t\|\widetilde \vZ_t\|_1$.
    	Since $\cU_j\ge \cU_{\min}:=\min(\cU_+,\cU_-)>0$ and $\max(\cU_j)=1$, we have
    	\(
    	\cU_{\min}\,\|\widetilde \vZ_t\|_1 \ \le\ \vY_t \ \le\ \|\widetilde \vZ_t\|_1
    	\)
    	for all $t$, hence
    	\[
    	\sum_{t\ge 0}\|\widetilde \vZ_t\|_1
    	\ \le\
    	\frac{1}{\cU_{\min}}\sum_{t\ge 0} \vY_t
    	\ =\
    	\frac{\vS}{\cU_{\min}}\, .
    	\]
    	Thus, by \eqref{eq:total-LT-domin}--\eqref{eq:Phi-star-bound} and the Chernoff's bound,
    	\[
    	\Pr\!\big(T\ge L\big)
    	\ \le\
    	\Pr\!\Big(\vS\ge \cU_{\min}\,L\Big)
    	\ \le\
    	e^{-\theta_\star \cU_{\min} L}\ \mathbb E\big[e^{\theta_\star \vS}\big]
    	\ \le\
    	e^{-\theta_\star \cU_{\min} L}\ \Phi(\theta_\star)
    	\ \le\
    	\frac{2}{1-\rho}\, \exp\!\Big(-\theta_\star \cU_{\min}\,L\Big).
    	\]
    	Absorbing $\cU_{\min}$ into the rate (it depends only on $M(\alpha)$ through $\cU$) and keeping the stated dependence on $\rho$ gives the announced constants:
    	\[
    	\delta(\rho)\ :=\ \theta_\star\,\cU_{\min}\ =\ \cU_{\min}\,\Big[-\log\!\Big(\tfrac{1+\rho}{2}\Big)\Big],
    	\qquad
    	\zeta(\rho)\ :=\ \frac{2}{1-\rho}\, .
    	\]
    	Since $\cU_{\min}\in(0,1]$ is fixed once the rule (hence $M(\alpha)$) is fixed, one may simply write $\delta(\rho)=-\log\!\big((1+\rho)/2\big)$ by weakening the exponent, which only strengthens the tail bound. 
    	This completes the proof.
    	
    \section{Proof of \Lem~\ref{lem:hooked-count}~\&~ \ref{lem_sen}}\label{sec_prooflemma}
    In this section we will prove the remaining unproven lemmas from \Sec~\ref{sec_model}.
    Before going to proof \Lem~\ref{lem:hooked-count} we need to start with the proof of \Lem~\ref{lem:hooked-from-contradiction}.
    \begin{proof}[Proof of \Lem~\ref{lem:hooked-from-contradiction}]
    By Lemma~\ref{lem_contradict_cycle}, there exists a contradictory cycle, namely two (not necessarily edge-disjoint) directed paths
    \(x\to\neg x\) and \(\neg x\to x\) for some variable \(x\).
    On the cycle pick a shortest directed path from some literal to its complement; write it as
    \(v\to \ell_1\to\cdots\to \ell_t=\neg v\) with $t\ge 2$ and the $\ell_i$ strongly distinct \Def~\ref{def:hooked}.
    Let $t'\ge t$ be maximal so that \(\ell_1,\ldots,\ell_t'\) remain strongly distinct along the cycle,
    and let \(w\) be the successor of \(\ell_t'\) on the cycle.
    Then \(\ell_0\to \ell_1\to\cdots\to \ell_t'\to w\) is a bicycle of length $t'+1\ge 3$.
    \end{proof}
    \begin{proof}[Proof of \Lem~\ref{lem:hooked-count}]
       	Again by Lemma~\ref{lem_contradict_cycle}, \(\vF_2\) is unsatisfiable if and only if \(\GG\) contains a contradictory cycle. 
       	Further by Lemma~\ref{lem:hooked-from-contradiction}, any contradictory cycle contains a bicycle of length at least 3, so \(\vF_2\) is unsatisfiable only if there exists a bicycle of some length \(t+1 \geq 3\) (i.e., \(t \geq 2\)). Taking a union bound over \(t \geq 2\),
       	\[
       	\Pr[\vF_2 \text{ is unsatisfiable}] \leq \sum_{t \geq 2} \Pr[B_t \geq 1] \leq \sum_{t \geq 2} \mathbb{E}[B_t],
       	\]
       	where the second inequality follows from Markov's inequality. Note that we sum up to \(t \leq n\) in practice (as paths involve at most \(n\) variables), but the bound holds regardless since the tail is negligible.
       	
       	To bound \(\mathbb{E}[B_t]\), note that a bicycle of length \(t+1\) is specified by the following quantities:
       	\begin{itemize}
       		\item A sequence of \(t\) strongly distinct literals \(\ell_1, \ldots, \ell_t\) forming an inner path of length \(t-1\) i.e., the \(t-1\) clauses \((\neg \ell_i \vee \ell_{i+1})\) for \(i=1,\dots,t-1\) are present.
       		\item The choices of endpoints \(v, w \in \{\ell_1,\ldots,\ell_t,\neg \ell_1,\ldots,\neg \ell_t\}\) such that the bicycle is completed by adding the two end clauses \((\neg v \vee \ell_1)\) and \((\neg \ell_t \vee w)\); 
       		\item The presence of these two end-clauses.
       	\end{itemize} 
       	
       	Let $\cY_t$ be the number of directed paths of $t$ strongly distinct literals in $\GG$. 
       	There are at most $(2t)^2$ choices for $v$ and $w$ (from the $2t$ literals including complements). 
       	Each end-clause appears independently with probability at most $\max\{2\alpha p_0 / n, \alpha p_1 / n, 2\alpha p_2 / n\}$ (approximately $n(n-1)$ possible '$+-$' clauses, $\binom{n}{2}$ for '$++$' and '$--$', but normalized for large $n$). 
       	By linearity of expectation and independence of the $m = \lfloor \alpha n \rfloor$ clauses,
       	\[
       	\mathbb{E}[B_t] \leq (2t)^2 \cdot \mathbb{E}[\cY_t] \cdot \left[ \max\left\{ \frac{2\alpha p_0}{n}, \frac{\alpha p_1}{n}, \frac{2\alpha p_2}{n} \right\} \right]^2 = (2t)^2 \cdot \mathbb{E}[\cY_t] \cdot \frac{\gamma}{n^2},
       	\]
       	where the \(\gamma / n^2\) accounts for the squared maximum clause probability.
       	
       	Now, \(\mathbb{E}[\cY_t] = n \cdot (\vZ_{t-1}^+ + \vZ_{t-1}^-)\), since there are \(n\) positive and \(n\) negative literals as potential starting points, and \(\vZ_{t-1}^\pm\) is the expected number of directed paths of length $t-1$ (i.e., $t$ literals) starting from a fixed positive (respectively negative) literal. 
       	Substituting,
       	\[
       	\mathbb{E}[B_t] \leq (2t)^2 \cdot n \cdot (\vZ_{t-1}^+ + \vZ_{t-1}^-) \cdot \frac{\gamma}{n^2} = \frac{4t^2}{n} \cdot (\vZ_{t-1}^+ + \vZ_{t-1}^-) \cdot \gamma.
       	\]
       	Summing over \(t \geq 2\) gives the bound.    	
    \end{proof}
    We conclude this chapter by proving the last \Lem~\ref{lem_sen}
    \begin{proof}[Proof of \Lem~\ref{lem_sen}]
    	We use a first-step decomposition and independence at the clause level.
    
    	Recall the model definition, a type $++$ clause produces two implications $\,-\!\to\!+$; whereas a type $--$ produces two $+\!\to\!-$;
    	Further a type $+-$ produces one $+\!\to\!+$ and one $-\!\to\!-$.
    	The clause indicators are mutually independent, with per clause probabilities
    	$q_{++},q_{+-},q_{--}$ as defined in the statement of \Lem~\ref{lem_sen}.
  
    	Let's fix a positive start literal $v$.
    	Also, let $\cW^+(v)$ (respectively $\cW^-(v)$) denote the set of positive (negative) literals $w$
    	with $\mathrm{var}\footnote{For a literal $v$, $\mathrm{var}(v)$ denotes the corresponding variable. It can be extending naturally to a set of literals $H$ by $\mathrm{var}(H)=\{\text{var}(v): v\in H$ \}}(w)\neq \mathrm{var}(v)$.
    	For each such $w$, define $\cK_{t-1}(w;v)$ to be the number of directed length-$(t\!-\!1)$
    	paths \emph{starting at $w$} that use strongly distinct variables and never use $\mathrm{var}(v)$.
    	Then the total number $\vN_t^+(x)$ of directed length-$t$ paths from $v$ can be written as
    	\[
    	\vN_t^+(v)\;=\;\sum_{w\in\mathcal \cW^+(v)} \mathbf 1\!\{(\neg v\vee w)\}\,\cK_{t-1}(w;v)
    	\ +\ \sum_{w\in\mathcal \cW^-(v)} \mathbf 1\!\{(\neg v\vee w)\}\,\cK_{t-1}(w;v),
    	\]
    	where $\mathbf 1\{(\neg v\vee w)\}$ is the indicator that the clause $(\neg v\vee w)$ is present.
    	For $w$ positive this clause is of type $+-$; for $w$ negative it is of type $--$.
    	
    	By construction, $\cK_{t-1}(w;v)$ counts paths that never use $\mathrm{var}(v)$,
    	while $\mathbf 1\{(\neg v\vee w)\}$ is a single clause that does contain $\mathrm{var}(v)$.
    	Hence $\cK_{t-1}(w;v)$ is independent of $\mathbf 1\{(\neg v\vee w)\}$,
    	so
    	\[
    	\ex[\vN_t^+(v)]
    	\;=\;\sum_{w\in\mathcal \cW^+(v)} \Pr(\neg v\vee w)\,\ex \cK_{t-1}(w;v)
    	\ +\ \sum_{w\in\mathcal \cW^-(v)} \Pr(\neg v\vee w)\,\ex \cK_{t-1}(w;v).
    	\]

    	Note that forbidding the single variable $\mathrm{var}(x)$ can only decrease the number of admissible paths,
    	so
    	\[
    	\ex[\cK_{t-1}(w;v)]\ \le\
    	\begin{cases}
    		\vZ_{t-1}^+,& w\text{ positive},\\
    		\vZ_{t-1}^-,& w\text{ negative}.
    	\end{cases}
    	\]
    	Moreover, $\Pr(\neg v\vee w)=q_{+-}$ if $w$ is positive and $q_{--}$ if $w$ is negative.
    	Since $|\cW^\pm(v)|\le n$, we obtain
    	\[
    	\vZ_t^+\ =\ \ex[\vN_t^+(v)]
    	\ \le\ n\,q_{+-}\,\vZ_{t-1}^+\ +\ n\,q_{--}\,\vZ_{t-1}^-
    	\ =\ \alpha p_1\,\vZ_{t-1}^+\ +\ 2\alpha p_0\,\vZ_{t-1}^-.
    	\]

    	Now the same thing holds for negative staring literal.
    	Let's fix a negative start literal $\neg v$.
    	The same argument like in positive case gives,
    	\[
    	\vZ_t^-\ \le\ n\,q_{++}\,\vZ_{t-1}^+\ +\ n\,q_{+-}\,\vZ_{t-1}^-
    	\ =\ 2\alpha p_2\,\vZ_{t-1}^+\ +\ \alpha p_1\,\vZ_{t-1}^-.
    	\]
    	
    	Combining the above two inequalities yield
    	\[
    	\vZ_t\ \le\ \alpha\begin{pmatrix} p_1 & 2p_0\\[2pt] 2p_2 & p_1\end{pmatrix}\vZ_{t-1}
    	\ =\ M(\alpha)\,\vZ_{t-1},\qquad t\ge 1.
    	\]
    	Iterating gives $\vZ_t\le M(\alpha)^t\,\vZ_0$.
    	With the length-$0$ convention $\vZ_0=(1,1)^\top$ ( trivial one), we conclude that
    	\[
    	\vZ_t\le M(\alpha)^t\,\mathbf 1
    	\]
    \end{proof}
	\section{Proofs of Main Results}\label{sec:proof-main}
	In this section we compute the selection frequencies \((p_0,p_1,p_2)\) for our rules defined in \Sec~\ref{sec_result} and then complete the proofs of Theorems~\ref{thm:main-symmetric} and \ref{thm:hybrid}.
	Recall the sign-profile masses from Lemma~\ref{lem_binom}:
	\[
	s_{\mathbf{AS}}=2^{1-k},\qquad s_{\mathbf{EDGE}}=\frac{k}{2^{\,k-1}},\qquad
	s_{\mathbf{MID}}=1-s_{\mathbf{AS}}-s_{\mathbf{EDGE}}.
	\]
	Also let,
	\[
	\beta:=s_{\mathbf{AS}}+s_{\mathbf{EDGE}}=\frac{k+1}{2^{\,k-1}}\qquad\text{and}\qquad
	A_\ell:=1-(1-s_{\mathbf{MID}})^\ell=1-\beta^\ell.
	\]
	
	\subsection{The MIDDLE-HEAVY rule}
	Under the MIDDLE-HEAVY rule we always select a \textbf{MID} clause if present; otherwise select from \textbf{EDGE} if present; otherwise from \textbf{AS}. 
	By exchangeability of the \(\ell\) candidates, conditional on the selected priority class, the chosen clause is a uniformly random clause from that class. 
	Using the projection in~\eqref{eq_X}, within each class the projected 2-clause type distribution is:
	\[
	\begin{array}{c|ccc}
		\text{class} & \text{type }-- & \text{type }+- & \text{type }++\\\hline
		\mathbf{AS}\ (\vX\in\{0,k\}) & \tfrac12 & 0 & \tfrac12\\
		\mathbf{EDGE}\ (\vX\in\{1,k-1\}) & 0 & \tfrac12 & \tfrac12\\
		\mathbf{MID}\ (2\le \vX\le k-2) & 0 & 0 & 1
	\end{array}
	\]
	The probability that the selected clause lies in \(\mathbf{MID}\) equals \(A_\ell=1-(1-s_{\mathbf{MID}})^\ell\).
	If no \(\mathbf{MID}\) occurs (probability \(\beta^\ell\)), then we select from \(\mathbf{EDGE}\) unless all \(\ell\) are \(\mathbf{AS}\), which has probability \(s_{\mathbf{AS}}^\ell\). Hence
	\[
	\Pr[\text{select }\mathbf{EDGE}]=(1-s_{\mathbf{MID}})^\ell - s_{\mathbf{AS}}^\ell=\beta^\ell - s_{\mathbf{AS}}^\ell,\qquad
	\Pr[\text{select }\mathbf{AS}]=s_{\mathbf{AS}}^\ell.
	\]
	Combining, the exact selection frequencies are
	\begin{equation}\label{eq:p012-sym}
			\begin{aligned}
				p_0^{\mathrm{SYM}} &= \frac12\,s_{\mathbf{AS}}^\ell,\\[2pt]
				p_1^{\mathrm{SYM}} &= \frac12\,(\beta^\ell - s_{\mathbf{AS}}^\ell),\\[2pt]
				p_2^{\mathrm{SYM}} &= A_\ell + \frac12\,(\beta^\ell - s_{\mathbf{AS}}^\ell) + \frac12\,s_{\mathbf{AS}}^\ell
				\;=\;1-\frac12\,\beta^\ell.
			\end{aligned}
	\end{equation}
	Therefore
	\begin{equation}\label{eq:Q-sym}
		Q_{\mathrm{SYM}} \;=\; p_1^{\mathrm{SYM}} + 2\sqrt{p_0^{\mathrm{SYM}}p_2^{\mathrm{SYM}}}
		\;=\;\frac12\,(\beta^\ell - s_{\mathbf{AS}}^\ell)\;+\;2\sqrt{\frac12\,s_{\mathbf{AS}}^\ell\Bigl(1-\frac12\,\beta^\ell\Bigr)}.
	\end{equation}
	By Theorem~\ref{thm:main-symmetric}, the certificate guarantees satisfiability w.h.p.\ whenever
	\(\alpha<1/Q_{\mathrm{SYM}}\).
	\begin{proof}[Proof of \Thm~\ref{thm:main-symmetric}]
		Finally to prove the \Thm~\ref{thm:main-symmetric} using the MIDDLE-HEAVY symmetric rule, we consider few cases.
		\begin{itemize}
			\item For $k=4$, we have $\alpha_{\mathbf{SYM}}(4,5)=1/Q_{\mathrm{SYM}}=18.086\dots$ which is strictly greater than the best known asymptotic random $k$-SAT upper bound $\sim 2^k\log 2$ (for more details refer to Table~\ref{tab:intro-compare}), so $\ell=5$ choices are enough.
			\item 	For $k=5$, numerically we check that $\alpha_{\textbf{SYM}}(5,4)=65.032\dots$ which is again strictly larger than the best known asymptotic random $k$-SAT upper bound $\sim 2^k\log 2$, so $\ell=4$ choices are enough.
			\item For $k\geq 6$, we again numerically check that $\alpha_{\mathbf{SYM}}(k,3)>2^k\log 2$ and moreover the function $\frac{2^k\log 2}{\alpha_{\mathbf{SYM}}(k,\ell)}$ is decreasing in $k$, so $3$ choices suffice for all $k\geq 6$.
		\end{itemize}
	\end{proof}
	\subsection{The Threshold-Symmetric Hybrid rule}
	From Algorithm~\ref{alg_hybrid} the hybrid rule first prefers \textbf{MID} as above; if none are present among the \(\ell\) candidates, it flips an unbiased coin \(b\in\{0,1\}\). If \(b=0\) it selects the clause with \emph{maximum} \(\vX\) (favoring positive clauses); if \(b=1\) it selects the clause with \emph{minimum} \(\vX\) (favoring negative clauses). Thus the only difference from MIDDLE-HEAVY occurs on the event \(\{\text{no MID}\}\), which has probability \(\beta^\ell\).
	
	Conditional on \(\{\text{no \textbf{MID}}\}\), every candidate has \(\vX\in\{0,1,k-1,k\}\), with per-candidate probabilities proportional to \((1,k,k,1)\). Hence under \(\{\text{no \textbf{MID}}\}\) the four values are i.i.d.\ with
	\[
	\Pr(\vX=0)=\frac{1}{2(k+1)},\quad
	\Pr(\vX=1)=\frac{k}{2(k+1)},\quad
	\Pr(\vX=k-1)=\frac{k}{2(k+1)},\quad
	\Pr(\vX=k)=\frac{1}{2(k+1)}.
	\]
	A direct order-statistics calculation yields the conditional type distribution:
	\[
	\begin{array}{c|ccc}
		\text{branch on }\{\text{no MID}\} & \text{type }-- & \text{type }+- & \text{type }++\\\hline
		b=0\ (\max \vX) & (2(k{+}1))^{-\ell} & 2^{-\ell}-(2(k{+}1))^{-\ell} & 1-2^{-\ell}\\
		b=1\ (\min \vX) & 1-\left(1-\tfrac{1}{2(k+1)}\right)^\ell & \left(1-\tfrac{1}{2(k+1)}\right)^\ell-2^{-\ell} & 2^{-\ell}
	\end{array}
	\]
	Averaging over the fair coin \(b\) and re-weighting by \(\beta^\ell=\Pr(\text{no MID})\), while \(\Pr(\text{select MID})=A_\ell\) contributes entirely to type \(++\), we get the exact frequencies:
	\begin{equation}\label{eq:p012-hyb}
			\begin{aligned}
				p_0^{\mathrm{HYB}} &= \frac{\beta^\ell}{2}\left[\,1-\Bigl(1-\frac{1}{2(k+1)}\Bigr)^\ell + \Bigl(\frac{1}{2(k+1)}\Bigr)^\ell\right],\\[4pt]
				p_1^{\mathrm{HYB}} &= \frac{\beta^\ell}{2}\left[\,\Bigl(1-\frac{1}{2(k+1)}\Bigr)^\ell - \Bigl(\frac{1}{2(k+1)}\Bigr)^\ell\right],\\[4pt]
				p_2^{\mathrm{HYB}} &= A_\ell \;+\;\frac{\beta^\ell}{2}.
			\end{aligned}
	\end{equation}
	 Hence,
	\begin{equation}\label{eq:Q-hyb}
		Q_{\mathrm{HYB}} \;=\; p_1^{\mathrm{HYB}} + 2\sqrt{p_0^{\mathrm{HYB}}p_2^{\mathrm{HYB}}}
		\;=\;\frac{\beta^\ell}{2}\left[\,\Bigl(1-\frac{1}{2(k+1)}\Bigr)^\ell - \Bigl(\frac{1}{2(k+1)}\Bigr)^\ell\right]
		+2\sqrt{p_0^{\mathrm{HYB}}\Bigl(A_\ell+\frac{\beta^\ell}{2}\Bigr)}.
	\end{equation}
	Therefore the certificate applies whenever \(\alpha<1/Q_{\mathrm{HYB}}\).
	
	\begin{remark}\label{rem:hyb-conditional}
		In each run the coin $b$ is fixed for all steps. Conditioning on $b$:
		\begin{itemize}
			\item If $b=0$ (maximize $X$ on the fallback), the \emph{coin-conditioned} frequencies are
			\[
			p_0=2^{-k\ell},\qquad
			p_1=2^{-k\ell}\big((k+1)^\ell-1\big),\qquad
			p_2=1-2^{-k\ell}(k+1)^\ell,
			\]
			which are exactly the "max-positives" formulas.
			\item If $b=1$ (minimize $X$), the coin-conditioned frequencies swap $p_0$ and $p_2$, with the same $p_1$.
		\end{itemize}
		Since the spectral parameter $Q=p_1+2\sqrt{p_0p_2}$ is symmetric in $p_0,p_2$,
		both branches yield the \emph{same} $Q$ (call it $Q_{\max}$). Thus one may
		equivalently apply the certificate \emph{conditional on $b$}, obtaining the
		threshold $\alpha<1/Q_{\max}$. The coin-\emph{averaged} formulas
		\eqref{eq:p012-hyb} are the unconditional selection frequencies of the hybrid
		process; using them also certifies $\alpha<1/Q_{\HYB}$, and conditioning shows
		$Q_{\HYB}$ and $Q_{\max}$ lead to the same (or stronger) sufficient condition.
	\end{remark}
	\begin{proof}[Proof of \Thm~\ref{thm:hybrid}]
		Finally to prove the \Thm~\ref{thm:hybrid} using the Threshold-Symmetric Hybrid rule, we consider few cases.
		\begin{itemize}
			\item For $k=4$, we have $\alpha_{\mathbf{SYM}}(4,4)=1/Q_{\mathrm{SYM}}=11.935\dots$ which is greater than the best known asymptotic random $k$-SAT upper bound $\sim 2^k\log 2$, so $\ell=4$ choices are enough.
			Note that, here we used the unbiased Hybrid rule where flipping a coin will give outcome either $0$ or $1$ with equal probability. 
			Of course for the biased hybrid rule one can beat the known asymptotic upper bound for the satisfiability threshold with $\ell=3$ choices (for specific threshold values refer to Table~\ref{tab:intro-compare}).
			\item For $k\geq 5$, we again numerically check that $\alpha_{\mathbf{SYM}}(k,3)>2^k\log 2$ and moreover the function $\frac{2^k\log 2}{\alpha_{\mathbf{HYB}}(k,\ell)}$ is decreasing in $k$, so $3$ choices suffice for all $k\geq 5$.
		\end{itemize}
	\end{proof}
		
	\section{Discussions and Open Problems}\label{sec:conclusions}
    The central idea of this work is to provide a purely \emph{assignment--symmetric}, online, topology--oblivious rule which can push the satisfiability threshold of random $k$-SAT strictly to the right for minimal~$\ell$ candidate choices.
    Our analysis establishes a certificate framework for the satisfiability of the semi-random Achlioptas process under sign-profile based candidate selection rules using a fixed $2$-SAT projection. 
    The core object is to analyze the two-type implication exploration with mean matrix
    \[
    M(\alpha)=\alpha
    \begin{pmatrix}
    	p_1 & 2p_0\\[2pt]
    	2p_2 & p_1
    \end{pmatrix},\qquad
    Q:=p_1+2\sqrt{p_0p_2},
    \]
    and the sufficiency condition $\alpha<1/Q$. 
    For both the rules symmetric and hybrid, we computed the clause type frequencies $(p_0,p_1,p_2)$ and hence obtaining explicit thresholds $\alpha=1/Q$.  
    Although the $1/Q$ bound is conventional, improving it without leaving a self-contained short certificate is a challenging task.
    Moreover our Threshold-Symmetric Hybrid algorithm flips a fair coin at the beginning of the process once.
    Conditioned on the outcome of the result, the process is either max-positives or min-positives on the non-MID fallback case. 
    Although the frequencies $(p_0,p_2)$ differ between the two branches max-positives and min-positives respectively, the certificate parameter
    \[
    Q=p_1+2\sqrt{p_0p_2}
    \]
    is invariant because it is symmetric in $p_0,p_2$. Consequently, the satisfiability threshold is the same in either of the branch and strictly larger than the threshold computed from the unbiased (coin-averaged) frequencies.
    But this is the first symmetric based rule as compared to the previous works \cite{DDHM,Perkins,SV} for the semi-random model in any random constraint satisfaction problems.
     
    We conclude with several open problems that we believe are both effective and accessible.
    
   \subsection*{(i).}
   Our results give \emph{lower bounds} on the threshold (SAT certificates). 
   There is an open question regarding the development of the  \emph{upper bounds} on the UNSAT certificates specific to these Achlioptas rules:
   \begin{itemize}
   	\item Show that for $\alpha> \alpha^\star(k,\ell)$ the formula is unsatisfiable w.h.p. under the MIDDLE-HEAVY or the Threshold-Symmetric Hybrid rule.
   	\item As a candidate approach one can think in the direction of unit-clause process with drift analysis, density evolution for pure-literal elimination or the emergence of a giant contradictory Strongly Connected Component (SCC) in the implication digraph.
   \end{itemize}
   
   \subsection*{(ii).}
   Can one design a certificate that adapts to the evolving formula, e.g., via a simplified process or a hybrid exploration that tracks unit implications as they are created?
   Also, our branching bound ignores correlations introduced by selecting a clause from $\ell$ candidates. 
   Can we exploit these correlations to certify a strictly smaller $Q$?
   
   \subsection*{(iii).}
   In this paper, we use the online Achlioptas process for calculating the thresholds in random $k$-SAT.
   One can think of the improvement in the threshold values achievable by assignment-symmetric offline version relative to the online process.
   
   \subsection*{(iv).}
   In this paper we use the static $2$-SAT projection rule.
   One interesting direction can be to extend the analysis of the projection idea to NAE-SAT or XOR-SAT and any other general $2$-CSPs with two spin types.
   
  And finally, many other works on the phase transition of Achlioptas random graph process either by accelerating or delaying the birth of giant component in several graph properties suggest many directions for future works.

\end{document}